\pdfoutput=1
\documentclass[aps,pre,preprint,a4paper,superscriptaddress]{revtex4-2}
\usepackage{graphicx}
\usepackage{amsmath,amssymb,amsthm,bm}
\usepackage{booktabs}
\usepackage{hyperref}
\usepackage{longtable}
\hypersetup{colorlinks=true,linkcolor=blue,urlcolor=blue,citecolor=blue}

\theoremstyle{plain}
\newtheorem{theorem}{Theorem}[section]
\newtheorem{corollary}{Corollary}[section]

\theoremstyle{definition}
\newtheorem{definition}{Definition}[section]
\newtheorem{observation}{Numerical Observation}[section]

\theoremstyle{remark}
\newtheorem{remark}{Remark}[section]

\usepackage{geometry}
\begin{document}
	
	\title{Shannon Entropy as an Order Parameter for the Two-Dimensional Confined Coulomb Systems: Exact Balance Law, Topological Charge Sum Rule, and Boundary Saturation}
	
	\affiliation{Dzhelepov Laboratory of Nuclear Problems, JINR, Dubna, Russian Federation}
	\affiliation{Meshcheryakov Laboratory of Information Technologies, JINR, Dubna, Russian Federation}
	\affiliation{Dubna State University, Dubna, Russian Federation}
	
	\author{G.\,K.~Lavrov}
	\email{lavrov@jinr.ru}
	\affiliation{Dzhelepov Laboratory of Nuclear Problems, JINR, Dubna, Russian Federation}
	\affiliation{Dubna State University, Dubna, Russian Federation}
	
	\author{E.\,G.~Nikonov}
	\email{e.nikonov@jinr.ru}
	\affiliation{Meshcheryakov Laboratory of Information Technologies, JINR, Dubna, Russian Federation}
	\affiliation{Dubna State University, Dubna, Russian Federation}
	\affiliation{HSE University, Moscow, Russian Federation}
	
	\date{\today}

	\begin{abstract}
		We establish a complete information-theoretic framework for the known minimum-energy configurations of the two-dimensional Thomson problem on a hard-wall disk, based on the Shannon entropy of the Voronoi topological charge distribution. Two exact results are proven and hold for all studied systems of size $N=12$--$10^5$. First, an entropy balance law decomposes the total entropy into bulk, boundary, and mixing contributions without approximation. Second, a topological charge sum rule fixes the total charge at twice the number of boundary particles plus six; together with the observed separation of bulk and boundary coordination numbers, it yields exact relations between the populations of three- and four-coordinated boundary particles and the bulk charge. We further prove that the boundary entropy is bounded by the binary maximum and that saturation of this bound is equivalent to the bulk charge per boundary particle approaching minus one half, a regime that the framework itself restricts to boundaries of at least twelve particles. All identities are verified numerically to machine precision, locating bulk nucleation at $56$ particles, the maximum of global disorder at $146$, and the onset of boundary saturation beyond about $1000$. The framework carries over to closed spherical geometry, where the total charge is fixed at twelve.
	\end{abstract}

	\pacs{61.50.Ah, 61.72.Lk, 05.20.-y, 89.70.Cf}
	
	\maketitle
	\newpage
	
	\section{Introduction}
	\label{sec:intro}
	
	The determination of minimum-energy configurations of $N$ identical point charges interacting via the $1/r$ Coulomb potential, originally motivated by Thomson's atomic model \cite{Thomson1904}, has become a paradigmatic problem	at the intersection of condensed matter physics, plasma physics, and geometry. In contrast to the well-known Thomson problem on a sphere \cite{Smale1998}, where charges are distributed over its closed surface, in the two-dimensional case, where charges are confined in a disk by a hard-wall potential, the boundary explicitly breaks both translational and rotational symmetry and competes with the intrinsic hexagonal order of the Wigner crystal. This geometric frustration forces the system to accommodate topological defects even in the ground state.
	
	Closely related ordering phenomena include magnetic-field-induced Wigner solids in two-dimensional electron systems \cite{Andrei1988}, Abrikosov vortex lattices in type-II superconductors \cite{Gammel1987}, charged colloidal and dusty-plasma crystals \cite{Yethiraj2003,Thomas1994}, and correlated electronic states in moir\'e heterostructures \cite{Andrei2020,Zhou2021}. In all of these systems the competition between interparticle repulsion and geometric confinement produces rich defect structures whose quantitative description remains an active challenge.
	
	The Thomson problem in a disk has been studied since the early works of Lozovik and Mandelshtam \cite{Lozovik1992} and Bolton and R\"ossler \cite{Bolton1992}, followed by Bedanov and Peeters \cite{Bedanov1994}, Peeters, Schweigert, and Bedanov \cite{Peeters1995}, and Erko\c{c} and Oymak \cite{Erkoc2001}; Thomson rings in a disk were investigated by Cerkaski, Nazmitdinov, and Puente \cite{Cerkaski2015}. For small $N$ ($\le10^{2}$), known minimum-energy configurations include those with ``magic-number'' \cite{Nazmitdinov2017}, and in most cases are characterized by discrete rotational and axial symmetries of defects. For larger $N$ ($\sim10^{3}$) dislocation scars --- linear chains of alternating five- and seven-fold disclinations --- emerge \cite{Irvine2010,Bausch2003}. For very large $N$ a bulk polycrystalline structure forms, screened from the boundary by a disordered edge layer \cite{Bedanov1994,Koulakov1998,Worley2006}, and the onset of crystallization in nonuniform density was addressed by Mughal and Moore \cite{Moore2007}. Efficient ``Divide \& Conquer'' approach in conjunction with the ``Basin-Hopping'' algorithm \cite{Wales1997} have recently pushed the accessible system size up to $N=40886$ \cite{Amore2023}, revealing an increasingly rich defect structure. The difficulty is compounded by a recently confirmed fact: the number of local minima of the Thomson problem grows exponentially with $N$. For the Thomson problem of a surface of the sphere, a systematic exploration of the energy landscape~\cite{Amore2025} has established that, for every system size studied, the number of distinct local minima far exceeds all previous estimates, and the average energy gap between nearly degenerate configurations decays exponentially. Together with the increasingly rich defect structure, this exponential proliferation of nearly degenerate states constitutes the principal obstacle to a closed-form solution --- precisely what Smale's seventh problem demands. More than $120$ years after Thomson's original formulation, the problem remains open not because the underlying physics is obscure, but because the energy landscape becomes exponentially complex: identifying the true ground state among a vast set of nearly indistinguishable competitors grows prohibitively difficult with every added particle.
	
	Despite this progress, a unified framework that links the topological complexity of these configurations to a single measurable quantity across all scales has remained elusive. In particular, no exact relation has been known that ties the topology of the boundary layer to that of the bulk.
	
	In this article, we fill this gap. We introduce the Shannon entropy of the Voronoi topological charge distribution as an order parameter and establish two exact results: (i) an entropy balance relation that decomposes the total entropy into bulk, boundary, and mixing contributions (Theorem~\ref{thm:balance}); and (ii) a topological charge sum rule, $\sum_iq_i=2N_E+6$ (Theorem~\ref{thm:sumrule}), which, together with the observed disjointness of bulk and boundary coordination sets, yields the exact boundary--bulk constraints $n_3=6-Q_B$ and $n_4=N_E+Q_B-6$ (Corollaries~\ref{cor:n3QB} and~\ref{cor:n4QB}).
	
	We further prove the rigorous upper bound $H_{\rm edge}\le\ln2$ (Theorem~\ref{thm:bound}) and show that its saturation is equivalent to $Q_B/N_E\to-1/2$ (Corollary~\ref{cor:saturation}). Together with the bulk charge constraint $6-N_E\leq Q_B\leq 6$ established in Remark~\ref{rmk:c&c}, this equivalence yields the self-consistent bound $N_E\geq12$ (Remark~\ref{rem:selfcons}): the framework itself determines the minimum boundary size below which the saturation regime cannot exist. All exact results are proven analytically and verified numerically for $N=12$--$10^5$.
		
	The paper is organized as follows. Section~\ref{sec:methods} defines the model system, describes the ground-state computations and numerical data used, Voronoi analysis, and the entropy decomposition. Section~\ref{sec:exact} contains the exact analytical results with full proofs. Section~\ref{sec:numerics} presents the numerical verification and scaling analysis. Section~\ref{sec:discussion} discusses universality and extensions. Section~\ref{sec:conclusions} summarizes our findings.
	
	\section{Methods}
	\label{sec:methods}
	
	\subsection{Model system}
	
	We study a system of $N$ identical classical point charges confined in a two-dimensional disk of radius $R=1$ by an infinite hard-wall potential and	interacting via the $1/r$ Coulomb potential. The Hamiltonian reads
	\begin{equation}
		H = \sum_{i=1}^{N} V(r_i) + \sum_{i<j}^{N}\frac{1}{|\mathbf{r}_i-\mathbf{r}_j|},
		\label{eq:hamiltonian}
	\end{equation}
	where $r_i=|\mathbf{r}_i|$ is the distance of particle $i$ from the center of the disk, and the confining potential is
	\begin{equation}
		V(r)=\begin{cases}
			0, & r \le R,\\
			\infty, & r>R.
		\end{cases}
		\label{eq:potential}
	\end{equation}
	All lengths are measured in units of the disk radius $R$ and energies in units of $e^2/R$.
	
	\subsection{Ground-state computation and numerical data}
	
	We analyze well-known minimum-energy configurations for $N=12$--$10^5$. Our dataset combines two sources: (i)~the configurations of	Ref. \cite{Amore2023}, which include all known minimum-energy configurations	for $N\le330$ and selected configurations up to $N=40886$; and (ii)~our own	computations, which extend the range to $N=10^5$ and fill in additional system sizes \cite{Lavrov2026_60ND,Lavrov2026_100000ND,Lavrov2026_8750ND}.
	
	The exact relations derived below do not require a proof that a configuration is the global ground state. They apply to any local minimum for which the	Voronoi--Delaunay graph is well defined and the bulk and boundary coordination sets are disjoint.

	\subsection{Voronoi analysis and the coordination-number convention}
	For each configuration the Voronoi tessellation is constructed using standard routines. Boundary particles possess open (unbounded) Voronoi cells, since there are no physical neighbors beyond the confining wall \cite{Bedanov1994,Kong2004}. The coordination number of a boundary particle equals the number of edges of its cell, counting both finite edges and the two unbounded rays; each edge corresponds to one physical neighbor. For visualization, these unbounded rays are truncated at the confining circle \cite{Lavrov2026_60SM,Lavrov2026_100000SM}.
		
	\begin{definition}[Coordination number and topological charge]\label{def:z}
		Each particle $i$ is assigned a coordination number $z_i$ equal to the number of edges of its Voronoi cell. For boundary particles, whose cells are unbounded, both finite edges and unbounded rays are counted, each corresponding to one physical neighbor. The topological charge is $q_i=6-z_i$.
	\end{definition}
	
	\begin{definition}[Bulk and boundary particles]\label{def:bulkboundary}
		A particle is a \emph{boundary} (\emph{edge}) particle ($E$) if its Voronoi cell is unbounded; equivalently, if it is a vertex of the convex hull of the point set. In the hard-wall minimum-energy configurations considered here, these particles lie on the confining circle. Otherwise a particle is a \emph{bulk} particle ($B$). Let $N_B$ and $N_E$ be the corresponding counts, $N=N_B+N_E$, and $p_B=N_B/N$, $p_E=N_E/N$. By construction, the boundary particles (those with open Voronoi cells) coincide with the vertices of the convex hull; we verify this numerically.
	\end{definition}
	
	This convention (Definition~\ref{def:z}) is essential for consistency with the Delaunay triangulation, with which the Voronoi diagram is dual: $z_i$ is the degree of vertex $i$ in the Delaunay graph. The topological charge sum rule (Theorem~\ref{thm:sumrule}) relies on this convention.
	
	\subsection{Shannon entropy decomposition}
	
	Within each subsystem we compute the empirical distribution of coordination numbers and the corresponding Shannon entropies \cite{Shannon1948,CoverThomas}:
	\begin{alignat}{2}
		H_{\rm total} &\equiv H(Z) &&= -\sum_z p_{\rm total}(z)\ln p_{\rm total}(z), \label{eq:Htotal}\\
		H_{\rm bulk}  &\equiv H(Z\mid T=B) &&= -\sum_z p_{\rm bulk}(z)\ln p_{\rm bulk}(z), \label{eq:Hbulk}\\
		H_{\rm edge}  &\equiv H(Z\mid T=E) &&= -\sum_z p_{\rm edge}(z)\ln p_{\rm edge}(z), \label{eq:Hedge}\\
		H_{\rm mix}   &\equiv H(T) &&= -\sum_{t\in\{B,E\}} p(t)\ln p(t).
		\label{eq:Hmixdef}
	\end{alignat}
	
	Here $Z$ is the coordination-number random variable and $T\in\{B,E\}$ is the particle type; $p_{\rm total}(z)=P(Z=z)$ is the marginal distribution,	$p_{\rm bulk}(z)=P(Z=z\mid T=B)$ and $p_{\rm edge}(z)=P(Z=z\mid T=E)$ are the conditional distributions, and $p(t)=P(T=t)$ is the marginal	distribution of the type, with $p(T=B)=p_B$, $p(T=E)=p_E$. Accordingly, $H_{\rm bulk}$ and $H_{\rm edge}$ are conditional entropies, $H_{\rm total}$ and $H_{\rm mix}$ are marginal entropies.
	
	\section{Exact analytical results}
	\label{sec:exact}
	
	The rigorous results below rest on two empirical inputs, which we formulate as numerical observations rather than proven theorems, since both are properties of the Coulomb minimum-energy configurations that cannot at present be derived analytically from first principles. Accordingly, the identities below apply not only to candidate ground states but also to local minima satisfying the same topological conditions.
	
	\begin{observation}[Disjointness of coordination sets]\label{obs:disjoint}
		For every analyzed minimum-energy configuration with $N\ge19$, every bulk coordination number belongs to $\{5,6,7,8,9\}$, and every boundary coordination number belongs to $\{3,4\}$. The value $z=9$ occurs only in rare small-$N$ configurations. For intermediate and large $N$ the bulk coordination numbers are $\{5,6,7,8\}$. In all cases the bulk and boundary coordination sets are disjoint. No coordination number $z\in\{3,4\}$ appears in the bulk, and no $z\ge5$ appears on the boundary. Verified by exhaustive inspection for all available $N=12$--$10^5$.
	\end{observation}
	
	\begin{observation}[Asymptotic equiprobability on the boundary]\label{obs:equiprob}
		For all analyzed minimum-energy configurations with $N\gtrsim1000$, the boundary entropy saturates at a value close to the Shannon bound, $H_{\rm edge}\approx 0.692\pm0.001$, arising from a near-equiprobable distribution of $z=3$ and $z=4$ boundary cells. Exact equiprobability is not realized at any finite $N$, but the deviation is insufficient to significantly reduce the entropy below $\ln2$.
	\end{observation}
	
	\begin{remark}[Status of the empirical inputs]
		Neither observation is a generic fact of planar geometry (arbitrary point sets can have bulk cells with $z=4$ or boundary cells with $z\ge5$); both are specific to the physical Coulomb minima. They are additionally cross-checked by the topological charge sum rule below, whose validity is sensitive to the correctness of the bulk/boundary classification and of the coordination-counting convention.
	\end{remark}
	
	\subsection{Exact entropy balance}
	
	\begin{theorem}[Exact entropy balance]\label{thm:balance}
		For every minimum-energy configuration, including any local minimum, satisfying Observation~\ref{obs:disjoint},
		\[
			\boxed{\,H_{\rm total}=p_B\,H_{\rm bulk}+p_E\,H_{\rm edge}+H_{\rm mix},\,}
		\]
		where $H_{\rm mix}=-p_B\,\ln p_B-p_E\,\ln p_E$.
	\end{theorem}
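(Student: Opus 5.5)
The identity is the chain rule for Shannon entropy, $H(Z,T)=H(T)+H(Z\mid T)$. The disjointness hypothesis (Observation~\ref{obs:disjoint}) is what turns the joint entropy $H(Z,T)$ into the marginal entropy $H(Z)=H_{\rm total}$. So the proof has two parts: a generic chain-rule computation, and a step showing that $T$ is a deterministic function of $Z$.

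\emph{Step 1: the chain rule.} I work with the empirical joint distribution $p(z,t)=N_{z,t}/N$, where $N_{z,t}$ is the number of particles of type $t$ with coordination number $z$. Its conditionals are $p_{\rm bulk}(z)=N_{z,B}/N_B$ and $p_{\rm edge}(z)=N_{z,E}/N_E$, so $p(z,t)=p(t)\,p(z\mid t)$. Expanding $-\sum_{z,t}p(z,t)\ln p(z,t)$ and splitting the logarithm of the product gives
\[
H(Z,T)=H_{\rm mix}+p_B H_{\rm bulk}+p_E H_{\rm edge}.
\]
This uses only $\sum_z p(z\mid t)=1$. Degenerate cases with $N_B=0$ or $N_E=0$ are handled by the convention $0\ln0=0$.

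\emph{Step 2: removing $T$.} By Observation~\ref{obs:disjoint}, the supports $S_B\subseteq\{5,\dots,9\}$ and $S_E\subseteq\{3,4\}$ are disjoint. Hence for each $z$ at most one of $N_{z,B}$ and $N_{z,E}$ is nonzero, and $T=f(Z)$ with $f(z)=E$ for $z\le4$ and $f(z)=B$ for $z\ge5$. Consequently $p_{\rm total}(z)=\sum_t p(z,t)=p(z,f(z))$, so the joint distribution is just a relabelling of the marginal one. It follows that $H(Z,T)=H(Z)$, or equivalently $H(T\mid Z)=0$.

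Combining the two steps yields the boxed identity.

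\emph{Main obstacle.} The only substantive point is Step 2, and it is exactly where the hypothesis is needed. Without disjointness one gets $H_{\rm total}=p_BH_{\rm bulk}+p_EH_{\rm edge}+H_{\rm mix}-H(T\mid Z)$, and the equality would fail by the nonnegative amount $H(T\mid Z)$. In the write-up I would state this more general identity explicitly, so that it is clear the theorem is exact precisely when the coordination number determines the particle type. Observation~\ref{obs:disjoint} supplies that condition; no property of the Coulomb energy itself is used.
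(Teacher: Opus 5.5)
Your proposal is correct and follows essentially the same route as the paper. Both arguments expand the joint entropy $H(Z,T)$ via the chain rule into $H_{\rm mix}+p_BH_{\rm bulk}+p_EH_{\rm edge}$, then use the disjointness of Observation~\ref{obs:disjoint} to get $H(T\mid Z)=0$, hence $H(Z,T)=H(Z)=H_{\rm total}$; your general identity with the $-H(T\mid Z)$ correction is exactly the intermediate equation $(\ast)$ in the paper's proof.
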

	
	\begin{proof}
		The information-theoretic chain rule \cite{Shannon1948,CoverThomas} gives two
		decompositions of the joint entropy:
		\[
			H(Z,T)=H(T)+H(Z\mid T)=H(Z)+H(T\mid Z).
		\]
		Hence
		\[
			H(Z)=H(T)+H(Z\mid T)-H(T\mid Z). \tag{$\ast$}
		\]
		By Observation~\ref{obs:disjoint}, for every observed $z$ the type $T$ is determined uniquely: if $z\in\{3,4\}$ then $T=E$ with probability $1$; if $z\ge5$ then $T=B$ with probability $1$. Thus, for each $z$,
		\[
			H(T\mid Z=z)=-1\ln1-0\ln0=0,
		\]
		using $\lim_{x\to0^+}x\ln x=0$. Therefore $H(T\mid Z)=\sum_z p_{\rm total}(z)H(T\mid Z=z)=0$. Substituting into $(\ast)$,
		\begin{equation}
			H(Z)=H(T)+H(Z\mid T). \label{eq:chain}
		\end{equation}
		The first term on the right-hand side of \eqref{eq:chain} is the marginal entropy of the type variable:
		\begin{equation}
			\begin{split}
				H(T) =-\sum_{t\in\{B,E\}}p(t)\ln p(t) = -p_B\ln p_B-p_E\ln p_E=H_{\rm mix}.\nonumber
			\end{split}
		\end{equation}
		The second term is the conditional entropy of $Z$ given $T$:
		\begin{equation}
			\begin{split}
				H(Z\mid T) =\sum_{t\in\{B,E\}}p(t)\,H(Z\mid T=t) = p_B\,H_{\rm bulk}+p_E\,H_{\rm edge}.\nonumber
			\end{split}
		\end{equation}
		Substituting both expressions into \eqref{eq:chain} yields the result.
	\end{proof}
	
	\begin{remark}[Mathematical status]
		\label{rem:mathstat}
		Theorem~\ref{thm:balance} is an exact identity within the empirical model built from each individual configuration. It uses only the definitions of Shannon and conditional entropy, the chain rule (a proven identity), and Observation~\ref{obs:disjoint}. No approximations, asymptotic limits, or fitting parameters enter, and no ensemble averaging is required.
	\end{remark}
	
	The mixing entropy
	\begin{equation}
		\label{eq:Hmix}
		H_{\rm mix}(N_B,N_E)=-p_B\ln p_B -p_E\ln p_E
	\end{equation}
	attains its global maximum $H_{\rm mix}^{\max}=\ln2$ at $N_B=N_E=N/2$. According to our data, the exact maximum of $H_{\rm mix}$ is observed at $N=148$ ($N_B=N_E=74$), $N=150$ ($N_B=N_E=75$), and $N=152$ ($N_B=N_E=76$). The maximum of the total entropy lies nearby, at $N=146$ ($N_B=72$, $N_E=74$), where $H_{\rm mix}(146)\approx0.693053$ approaches $\ln2$ to within $\approx9.4\times10^{-5}$. Thus, in the vicinity of subsystem equipartition the mixing entropy approaches its maximum $\ln2$ and thereby contributes substantially to the global maximum of $H_{\rm total}$ at $N=146$, although it is not the sole factor determining this maximum. It should be stressed that $H_{\rm total}(146)\approx1.66750$ remains well below the theoretical upper bound $\ln6\approx1.791759$ relevant for the coordination number set $\{3,4,5,6,7,8\}$. 
	
	\subsection{Topological charge sum rule}
	
	\begin{theorem}[Topological charge sum rule]\label{thm:sumrule}
		Let $N$ particles be in general position with $N_E$ of them on the convex hull, and let $q_i=6-z_i$ where $z_i$ is the degree of particle $i$ in the Delaunay triangulation. Then
		\[
			\boxed{\,\sum_{i=1}^{N} q_i=2N_E+6.\,}
		\]
	\end{theorem}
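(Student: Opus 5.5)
The plan is to reduce the statement to Euler's formula for the planar Delaunay triangulation. Since the points are in general position, the Delaunay graph is a triangulation of the convex hull. Its outer boundary is the convex polygon on the $N_E$ hull vertices, and every bounded face is a triangle. Write $V=N$, $E$ for the number of Delaunay edges, and $F$ for the number of triangles, so that the total face count including the unbounded face is $F+1$. Euler's formula for a connected plane graph then gives
\[
N-E+(F+1)=2 .
\]

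The second step is a double count of edge–face incidences. Each triangle has three edges. Each interior edge borders two triangles, and each of the $N_E$ hull edges borders exactly one triangle and the outer face. Hence
\[
3F+N_E=2E .
\]
Eliminating $F$ between the two relations gives $E=3N-3-N_E$. This is the standard edge count of a triangulation with $N_E$ hull vertices, and I expect it to come out cleanly once the face bookkeeping is done.

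Finally, by the handshake lemma $\sum_i z_i=2E$, because by Definition~\ref{def:z} and the duality remark $z_i$ is exactly the Delaunay degree. Therefore
\[
\sum_i q_i=6N-2E=6N-6N+6+2N_E=2N_E+6,
\]
which is the claim.

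The main obstacle is not the algebra but justifying the combinatorial setup. We need that the Delaunay graph is connected, that it triangulates the hull with the hull polygon as outer face, and that the boundary particles of Definition~\ref{def:bulkboundary} are exactly the hull vertices. General position is what lets us avoid cocircular quadruples, which would create non-triangular faces, and collinear hull points, which would break the count of hull edges. For the identification of unbounded Voronoi cells with hull vertices, I would invoke the standard Voronoi–Delaunay duality: a Voronoi cell is unbounded exactly when its site lies on the convex hull.

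I would also remark that the unbounded Voronoi rays are counted as edges. Under this convention each hull vertex's degree counts its two hull neighbours, which is precisely what Definition~\ref{def:z} prescribes. A degenerate check, such as a triangle with $N=N_E=3$ giving $\sum_i q_i=3\cdot 4=12=2\cdot 3+6$, confirms the constant.
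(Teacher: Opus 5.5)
Your proof is correct and follows essentially the same route as the paper. Both arguments use Euler's formula, the edge--face double count $3F=2E-N_E$, and the handshake lemma $\sum_i z_i=2E=6N-6-2N_E$. The only difference is bookkeeping: you count the outer face in $N-E+(F+1)=2$, while the paper counts only triangles with $\chi=1$ for the disk, and both reduce to $N-E+F=1$.
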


	\begin{proof}
		The Delaunay triangulation is a triangulation of the disk (the convex hull), whose Euler characteristic is $\chi=1$. Let $V$, $R$, and $G$ denote the numbers of vertices, edges, and triangular faces, respectively. Then $V=N$ and, according to Euler's theorem, in our case
		\begin{equation}
			\label{eq:eulerchar}
			V-R+G=1.
		\end{equation}
		The boundary of the hull is a convex $N_E$-gon, contributing $N_E$ boundary edges. Counting face--edge incidences, each of the $G$ faces has $3$ edges, each interior edge is shared by $2$ faces, and each of the $N_E$ boundary edges is shared by $1$, yielding
		\begin{equation}
			\label{eq:3G}
			3G=2(R-N_E)+N_E=2R-N_E.
		\end{equation}
		From \eqref{eq:eulerchar}, $G=1-N+R$. Substituting into \eqref{eq:3G},
		\begin{equation}
			\label{eq:R}
			3(1-N+R)=2R-N_E\;\Longrightarrow\;R=3N-3-N_E.
		\end{equation}
		By the handshaking lemma, the sum of vertex degrees equals twice the number of edges; the degree of vertex $i$ is its coordination number $z_i$:
		\begin{equation}
			\label{eq:sumz}
			\sum_{i=1}^{N}z_i=2R=6N-6-2N_E.
		\end{equation}
		Therefore
		\begin{equation}
			\begin{split}
				\sum_{i=1}^{N}q_i = \sum_{i=1}^{N}(6-z_i) = 6N-\sum_{i=1}^{N}z_i = 6N-(6N-6-2N_E) = 2N_E+6. \qquad \blacksquare
			\end{split}
		\end{equation}
	\end{proof}
	
	\begin{remark}[Assumptions]
		Theorem~\ref{thm:sumrule} assumes: (i) general position (no four cocircular points), so the Delaunay triangulation is well defined; (ii) the boundary particles coincide with the convex-hull vertices; (iii) $z_i$ counts physical neighbors only (Definition~\ref{def:z}). These conventions are verified numerically.
	\end{remark}
	
	\begin{remark}[Spherical analogue]
		For a closed surface the Euler characteristic replaces the boundary term: on the sphere $\sum_iq_i=12$, the twelve-disclination rule. The present result is its planar-disk counterpart, with the additional $2N_E$ contribution arising from the boundary.
	\end{remark}
	
	\begin{remark}[Relation to the result of Koulakov and Shklovskii]
		Koulakov and Shklovskii \cite{Koulakov1998} proved, using Euler's theorem, that the total disclination charge of a Wigner crystal island is $N_c=6$, where the disclination charge is defined as the deviation of the coordination number from $6$ in the bulk and from $4$ on the boundary. Our result $\sum_i q_i=2N_E+6$ is related to their result by the identity $\sum_i q_i = N_c+2N_E$, which follows from the definition $q_i=6-z_i$ and the fact that boundary particles have $z\in\{3,4\}$ rather than $z=4$. Our result is \textit{more general}: it applies to any configuration with disjoint bulk and boundary coordination sets, including all local minima, and is formulated directly in terms of the coordination numbers $z_i$.
	\end{remark}
	
	\subsection{Boundary--bulk charge relations}
	
	\begin{corollary}[Boundary--bulk charge relation for $n_3$]\label{cor:n3QB}
		Under Observation~\ref{obs:disjoint}, let $n_3$ be the number of boundary particles with $z=3$ and $Q_B=\sum_{i\in B}q_i$ the total bulk charge. Then
		\[
			\boxed{\,n_3=6-Q_B.\,}
		\]
	\end{corollary}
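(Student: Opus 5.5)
We derive the relation directly from the topological charge sum rule (Theorem~\ref{thm:sumrule}) by splitting the total charge into bulk and boundary parts, $\sum_{i=1}^N q_i = Q_B + Q_E$ with $Q_E=\sum_{i\in E}q_i$. Theorem~\ref{thm:sumrule} then gives $Q_B+Q_E=2N_E+6$. The plan has two steps: express $Q_E$ through the boundary populations, then solve for $n_3$.

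By Observation~\ref{obs:disjoint}, every boundary particle has $z\in\{3,4\}$. Hence each boundary particle carries charge $q=3$ or $q=2$. Let $n_4$ denote the number of boundary particles with $z=4$. Then
\[
N_E=n_3+n_4,\qquad Q_E=3n_3+2n_4=2N_E+n_3 .
\]
Substituting into the sum rule gives $Q_B+2N_E+n_3=2N_E+6$. The $2N_E$ terms cancel, leaving $n_3=6-Q_B$.

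The only substantive point is the use of Observation~\ref{obs:disjoint} on the boundary side. It guarantees that no boundary particle has $z\ge5$ (nor $z\le2$), so the boundary charge is exactly $2N_E+n_3$. This is what makes the relation exact rather than approximate. I would also note that the classification into $B$ and $E$ must match the convex-hull vertex set used in Theorem~\ref{thm:sumrule}; Definition~\ref{def:bulkboundary} ensures this.

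Bulk coordination numbers enter only through $Q_B$, so no assumption about the bulk set beyond disjointness is required. Beyond these consistency checks I expect no real obstacle, since the argument is a two-line rearrangement of the sum rule.
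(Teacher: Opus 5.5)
Your proof is correct and matches the paper's argument: split the sum rule of Theorem~\ref{thm:sumrule} into $Q_E+Q_B$, use Observation~\ref{obs:disjoint} to write $Q_E=3n_3+2n_4=n_3+2N_E$, and cancel the $2N_E$ terms. Your remark that the bulk/boundary split must coincide with the convex-hull vertex set is a sensible consistency check, which the paper handles through Definition~\ref{def:bulkboundary} and the assumptions stated after Theorem~\ref{thm:sumrule}.
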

	
	\begin{proof}
		Split the sum rule into boundary and bulk parts. Boundary particles have $z\in\{3,4\}$ (Observation~\ref{obs:disjoint}), hence $q\in\{3,2\}$; with $n_3+n_4=N_E$,
		\[
			Q_E=\sum_{i\in E}q_i=3n_3+2n_4=3n_3+2(N_E-n_3)=n_3+2N_E.
		\]
		Theorem~\ref{thm:sumrule} then gives
		\[
			(n_3+2N_E)+Q_B=2N_E+6\;\Longrightarrow\;n_3+Q_B=6.\qquad\blacksquare
		\]
	\end{proof}
	
	\begin{corollary}[Boundary--bulk charge relation for $n_4$]\label{cor:n4QB}
		Under Observation~\ref{obs:disjoint}, let $n_4$ be the number of boundary particles with
		$z=4$. Then
		\[
			\boxed{\,n_4=N_E+Q_B-6.\,}
		\]
	\end{corollary}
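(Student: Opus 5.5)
The plan is to derive this directly from Corollary~\ref{cor:n3QB} together with the partition of the boundary population. Observation~\ref{obs:disjoint} says every boundary particle has $z\in\{3,4\}$ and no bulk particle does. So the boundary set splits exactly into the $z=3$ and $z=4$ classes, which gives the counting identity
\[
n_3+n_4=N_E .
\]
This identity has already been used inside the proof of Corollary~\ref{cor:n3QB}. I would restate it explicitly here, because it is the only structural input beyond the sum rule.

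Next, I substitute $n_3=6-Q_B$ from Corollary~\ref{cor:n3QB} into $n_4=N_E-n_3$. This gives $n_4=N_E-(6-Q_B)=N_E+Q_B-6$, which is the claim.

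As an independent check, and as an alternative route that avoids invoking the previous corollary, I would rederive the result directly from Theorem~\ref{thm:sumrule}. Write $Q_E=3n_3+2n_4$ and eliminate $n_3$ instead of $n_4$:
\[
Q_E=3(N_E-n_4)+2n_4=3N_E-n_4 .
\]
Then $Q_E+Q_B=2N_E+6$ gives $3N_E-n_4+Q_B=2N_E+6$, which is again $n_4=N_E+Q_B-6$.

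There is no real obstacle here; the argument is pure bookkeeping. The only point needing care is that the hypotheses of Theorem~\ref{thm:sumrule} hold: general position, boundary particles coinciding with convex-hull vertices, and the coordination-counting convention of Definition~\ref{def:z}. It also matters that Observation~\ref{obs:disjoint} ensures no boundary particle has $z\ge5$, since otherwise $n_3+n_4=N_E$ would fail. Finally, I would remark that combining $n_3\ge0$ and $n_4\ge0$ with the two corollaries yields $6-N_E\le Q_B\le6$, which anticipates Remark~\ref{rmk:c&c}.
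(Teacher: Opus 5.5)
Your proposal is correct and matches the paper's proof: use $n_3+n_4=N_E$ (from Observation~\ref{obs:disjoint}) and substitute $n_3=6-Q_B$ from Corollary~\ref{cor:n3QB} to get $n_4=N_E+Q_B-6$. Your second route, eliminating $n_3$ via $Q_E=3N_E-n_4$ and applying Theorem~\ref{thm:sumrule} directly, is also valid and is the same bookkeeping with the other boundary population eliminated.
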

	
	\begin{proof}
		From $N_E=n_3+n_4$ and Corollary~\ref{cor:n3QB},
		\[
			n_4=N_E-n_3=N_E-(6-Q_B)=N_E+Q_B-6.\qquad\blacksquare
		\]
	\end{proof}
	
	\begin{remark}[Completeness and consistency]
		\label{rmk:c&c}
		Corollaries~\ref{cor:n3QB} and~\ref{cor:n4QB} together express both boundary populations $n_3$ and $n_4$ entirely in terms of the bulk charge $Q_B$ and the boundary size $N_E$. Their sum reproduces $n_3+n_4=N_E$ identically. Non-negativity of $n_3$ and $n_4$ yields the two-sided constraint
		\[
			6-N_E\;\le\;Q_B\;\le\;6,
		\]
		which provides an additional consistency check on the numerical data. In the saturation regime $Q_B\sim-N_E/2$, so both $n_3$ and $n_4$ scale as $N_E/2$, as required by Observation~\ref{obs:equiprob}.
		
		Moreover, the boundary charge $Q_E=\sum_{i\in E}q_i$ is itself fixed by the same parameters, $Q_E=2N_E+n_3=2N_E+6-Q_B$, so that the single quantity $Q_B$ determines both boundary populations $n_3,n_4$ and the boundary charge $Q_E$.
	\end{remark}
	
	\subsection{Boundary entropy bound and saturation}
	
	\begin{theorem}[Boundary entropy upper bound]\label{thm:bound}
		Under Observation~\ref{obs:disjoint}, the boundary entropy satisfies
		\[
			H_{\rm edge}\le\ln2,
		\]
		with equality if and only if $p_{\rm edge}(3)=p_{\rm edge}(4)=1/2$.
	\end{theorem}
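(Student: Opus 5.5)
By Observation~\ref{obs:disjoint}, every boundary coordination number lies in $\{3,4\}$. The plan is therefore to reduce $H_{\rm edge}$ to the binary entropy function and maximize that function directly. Set $x=p_{\rm edge}(3)=n_3/N_E$. Then $p_{\rm edge}(4)=n_4/N_E=1-x$, since $n_3+n_4=N_E$ and no other value of $z$ carries positive mass on the boundary. Equation~\eqref{eq:Hedge} then becomes
\[
H_{\rm edge}=h(x)\equiv -x\ln x-(1-x)\ln(1-x),\qquad x\in[0,1],
\]
with the convention $0\ln0=0$ (justified by $\lim_{x\to0^+}x\ln x=0$, as in the proof of Theorem~\ref{thm:balance}). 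This makes $h$ continuous on the closed interval, with $h(0)=h(1)=0$.

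Next we maximize $h$. On $(0,1)$ we have $h'(x)=\ln\frac{1-x}{x}$ and $h''(x)=-\frac{1}{x(1-x)}<0$, so $h$ is strictly concave. Its derivative vanishes only at $x=1/2$. Hence $x=1/2$ is the unique global maximizer on $[0,1]$: it beats the interior points by strict concavity and the endpoints by direct comparison, since $h(1/2)=\ln2>0=h(0)=h(1)$. This gives $H_{\rm edge}\le\ln2$, with equality exactly when $x=1/2$, that is, when $p_{\rm edge}(3)=p_{\rm edge}(4)=1/2$. The converse direction of the equality statement is immediate by substitution.

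As a cross-check, we could instead use Gibbs' inequality: the entropy of any distribution supported on $k$ points is at most $\ln k$, with equality iff the distribution is uniform, which follows from Jensen's inequality applied to the concave logarithm. Taking $k=2$ reproduces both the bound and the equality case.

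The only genuine subtlety is the reduction to a two-point support. It rests entirely on Observation~\ref{obs:disjoint}. If the support is degenerate (all boundary cells of a single type), we get $H_{\rm edge}=0$, which is still covered by the endpoint analysis. Without the observation the support could be larger and the bound would fail, so the proof must explicitly note that this empirical input is what confines the support to $\{3,4\}$.
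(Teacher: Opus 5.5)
Your proposal is correct and follows essentially the same route as the paper. Both arguments use Observation~\ref{obs:disjoint} to confine the boundary distribution to $\{3,4\}$, and both then invoke strict concavity of the entropy. The paper applies Jensen's inequality to $f(x)=-x\ln x$ on the pair $(p_3,p_4)$, whereas you maximize the binary entropy $h(x)$ by calculus and compare against the endpoints; this is only a cosmetic difference.
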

	
	\begin{proof}
		By Observation~\ref{obs:disjoint} the boundary distribution is supported on $\{3,4\}$, so $H_{\rm edge}=-p_3\ln p_3-p_4\ln p_4$ with $p_3+p_4=1$. Since the function $f(x)=-x\ln x$ is strictly concave, by Jensen's inequality the unique maximum under the constraint $p_3+p_4=1$ is attained at $p_3=p_4=1/2$ and equals $\ln2$.
	\end{proof}
	
	\begin{corollary}[Saturation $\Leftrightarrow$ bulk charge]\label{cor:saturation}
		The saturation $H_{\rm edge}\to\ln2$ is equivalent to
		\[
			\frac{Q_B}{N_E}\longrightarrow-\frac12\qquad(N\to\infty).
		\]
	\end{corollary}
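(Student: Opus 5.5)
The plan is to express $H_{\rm edge}$ as a function of the single ratio $Q_B/N_E$ using Corollaries~\ref{cor:n3QB} and~\ref{cor:n4QB}, and then to use continuity and strict monotonicity of the binary entropy on each side of $1/2$.

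\textbf{Step 1: rewrite the boundary distribution.} By Observation~\ref{obs:disjoint} the boundary distribution is supported on $\{3,4\}$, so $p_{\rm edge}(3)=n_3/N_E$. Corollary~\ref{cor:n3QB} gives $n_3=6-Q_B$, hence
\[
p_{\rm edge}(3)=\frac{6-Q_B}{N_E}=\frac{6}{N_E}-\frac{Q_B}{N_E},\qquad p_{\rm edge}(4)=1-p_{\rm edge}(3)=\frac{N_E+Q_B-6}{N_E}.
\]
The second expression agrees with Corollary~\ref{cor:n4QB}. Writing $h(x)=-x\ln x-(1-x)\ln(1-x)$, we get $H_{\rm edge}=h(6/N_E-Q_B/N_E)$.

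\textbf{Step 2: the limit regime.} In the regime $N\to\infty$ one has $N_E\to\infty$. The number of boundary particles grows without bound with $N$; I would treat this as part of the meaning of the limit and note it explicitly, since the data show $N_E\sim\sqrt{N}$. Then $6/N_E\to0$, so $p_{\rm edge}(3)-(-Q_B/N_E)\to0$. Consequently $p_{\rm edge}(3)\to1/2$ if and only if $Q_B/N_E\to-1/2$.

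\textbf{Step 3: entropy saturation is equivalent to equiprobability.} It remains to show that $h(p_N)\to\ln2$ if and only if $p_N\to1/2$.
\begin{itemize}
\item The "if" direction follows from continuity of $h$.
\item For the "only if" direction, use that $h$ is continuous and strictly concave on $[0,1]$ with a unique maximizer at $1/2$ (Theorem~\ref{thm:bound}). For any $\varepsilon>0$, compactness of $\{x\in[0,1]:|x-1/2|\ge\varepsilon\}$ gives $\sup h<\ln2$ on that set. So if $h(p_N)\to\ln2$, then eventually $|p_N-1/2|<\varepsilon$.
\end{itemize}
Chaining Steps 1--3 gives the equivalence.

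\textbf{Main obstacle.} The only delicate point is the "only if" direction of Step 3, together with making the hypothesis $N_E\to\infty$ explicit so that the $6/N_E$ offset vanishes. Without that hypothesis the equivalence would instead read $(6-Q_B)/N_E\to1/2$.
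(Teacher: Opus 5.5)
Your proof is correct and follows the same route as the paper's: substitute $n_3=6-Q_B$ into $p_{\rm edge}(3)=n_3/N_E$, use the equality case of Theorem~\ref{thm:bound}, and let $N_E\to\infty$ remove the $6/N_E$ offset. It is in fact more careful than the paper's sketch, which states only the direction $p_{\rm edge}(3)\to1/2\Rightarrow H_{\rm edge}\to\ln2$ and writes the ill-posed limit $n_3\to N_E/2$, whereas you work with ratios and supply the compactness argument for the converse. One small factual correction that does not affect the logic: the tabulated data (and the Worley scaling cited in the paper) give $N_E\approx 2.8\,N^{2/3}$, not $N_E\sim\sqrt{N}$; all your argument needs is $N_E\to\infty$.
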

	
	\begin{proof}
		By Theorem~\ref{thm:bound}, $H_{\rm edge}\to\ln2$ if $p_{\rm edge}(3)\to1/2$, i.e.\ $n_3\to N_E/2$. By Corollary~\ref{cor:n3QB}, $n_3=6-Q_B$, hence $6-Q_B\to N_E/2$, i.e.\ $Q_B\to6-N_E/2$, and $Q_B/N_E\to-1/2$ as $N_E\to\infty$.
	\end{proof}
	
	\begin{remark}[The saturation is non-trivial]
		Corollary~\ref{cor:saturation} shows the saturation is not automatic. A charge-neutral bulk ($Q_B=0$) would give $n_3=6$ and hence $p_{\rm edge}(3)=6/N_E\to0$, implying $H_{\rm edge}\to0$ rather than $\ln2$. The observed saturation therefore requires the bulk to carry a net negative charge $Q_B\sim-\tfrac12N_E$, organized in scars. This is a quantitative, falsifiable prediction, confirmed in the next section.
	\end{remark}
	
	\begin{remark}[Self-consistency of the saturation regime]\label{rem:selfcons}
		The saturation regime $Q_B/N_E\to-1/2$ (Corollary~\ref{cor:saturation}) must be compatible with the two-sided constraint $6-N_E\leq Q_B\leq 6$ established in Remark~\ref{rmk:c&c}. Substituting the asymptotic value	$Q_B\approx-N_E/2$ into both bounds:
		\begin{itemize}
			\item Lower bound: $-N_E/2\geq 6-N_E$ yields $N_E/2\geq 6$,
			i.e.\ $N_E\geq12$.
			\item Upper bound: $-N_E/2\leq 6$ yields $N_E\geq-12$, which is
			trivially satisfied for any physical $N_E>0$.
		\end{itemize}
		The only non-trivial condition is therefore
		\[
			\boxed{N_E\geq12,}
		\]
		which emerges entirely from within the theory: the framework itself determines the minimum boundary size required for the saturation regime to be physically realizable. Below this threshold the bulk cannot carry the negative charge $Q_B\approx-N_E/2$ needed to sustain the equiprobable boundary distribution, and the saturation $H_{\rm edge}\to\ln2$ is precluded. This bound is consistent with the Observation~\ref{obs:equiprob} that saturation sets in only for $N\gtrsim1000$, corresponding to $N_E\gg12$.
	\end{remark}
	
	\section{Numerical results}
	\label{sec:numerics}
		
	\subsection{Verification of the entropy balance}
	
	Figure~\ref{fig:decomp} shows $H_{\rm total}$ (a), $H_{\rm bulk}$ (b), $H_{\rm edge}$ (c), and $H_{\rm mix}$ (d) as functions of $N$. The decomposition of Theorem~\ref{thm:balance} is verified to machine precision across the entire range: the value $\Delta=H_{\rm total}-(p_BH_{\rm bulk}+p_EH_{\rm edge})$ is exactly equal to $H_{\rm mix}$ (with an accuracy of $10^{-6}$) \cite{Lavrov2026SM}.
	
	The most striking feature is the non-monotonic shape of $H_{\rm total}$: it rises to a maximum at $N=146$ and then decays slowly. The global maximum at $N=146$ ($H_{\rm total}(146)\approx1.66750$) is determined --- in addition to the contribution of the bulk and boundary components ($H_{\rm bulk}(146)\approx1.282064$, $H_{\rm edge}(146)\approx0.675143$) --- by the mixing entropy, which at this point is close to its maximum $\ln2$ ($H_{\rm mix}(146)\approx0.693053$).
	
	\begin{figure}[t]
		\centering
		\includegraphics[width=\columnwidth]{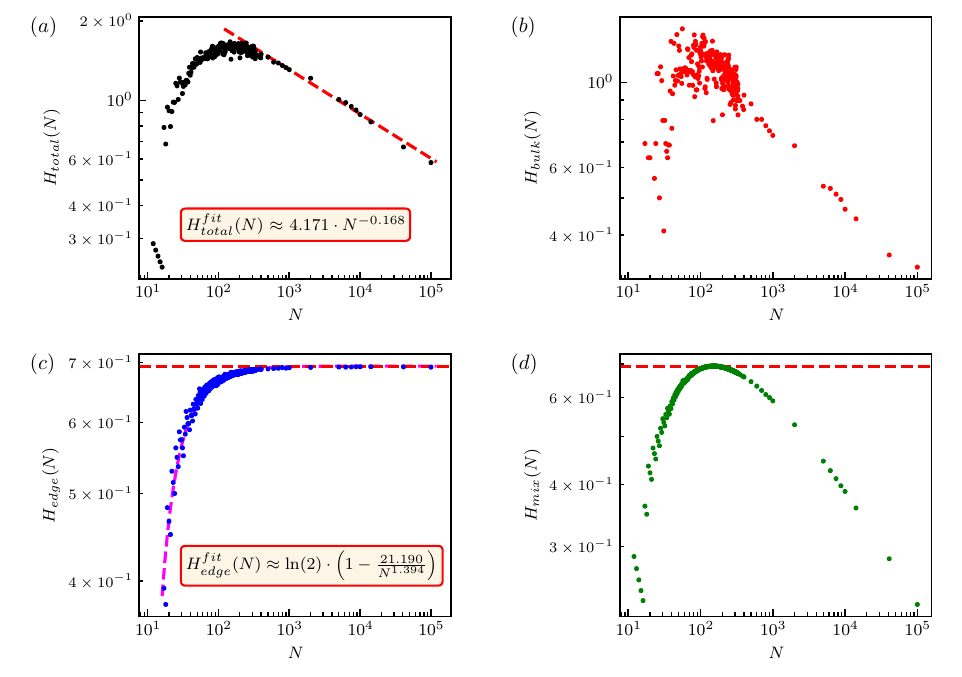}
		\caption{(a) Total, (b) bulk, (c) edge, and (d) mixing entropies vs.\ $N$; $H_{\rm total}$ peaks at $N=146$; $H_{\rm bulk}$ peaks at $N=56$; the peak value ($\ln2$) of $H_{\rm mix}$ is observed three times: at $N=148$, at $N=150$ and at $N=152$; The red dashed line in plot (a) denotes a power-law fit ($\propto N^{-1/6}$) to the high-$N$ tail of the total entropy function; the purple dashed line in plot (c) is a power-law fit of the edge entropy function, showing a rapid approach toward the macroscopic limit with a scaling exponent of $\approx 1.394$; the red dashed lines in plots (c) and (d) mark $\ln2$.}
		\label{fig:decomp}
	\end{figure}
	
	\subsection{Verification of the charge sum rule}
	
	Table~1 in Ref.~\cite{Lavrov2026SM} verifies the sum rule of Theorem~\ref{thm:sumrule}: the measured $\sum_iq_i$ and $2N_E+6$ agree with \emph{exact zero residual} for every computed $N$ from $12$ to $10^5$. This simultaneously confirms the bulk/boundary classification and the coordination-counting convention of Definition~\ref{def:z}.
	
	We also verify the boundary--bulk constraints of Corollaries~\ref{cor:n3QB} and~\ref{cor:n4QB}: $n_3+Q_B=6$ holds exactly for all systems, and $n_4$ computed from $N_E+Q_B-6$ matches the directly counted value.
	
	\subsection{Bulk entropy: nucleation and crystallization}
	
	Figure~\ref{fig:decomp}(b) displays $H_{\rm bulk}(N)$. It rises sharply for small $N$, reaches a maximum at $N=56$, and thereafter decays monotonically toward zero. The maximum at $N=56$ acts as the nucleation threshold of the bulk phase: below this size, the system is essentially ``all boundary''. The peak value $H_{\rm bulk}(56)\approx1.380452$ lies remarkably close to	the theoretical upper bound $\ln4\approx1.386294$ for a uniform distribution over the four bulk coordination numbers $\{5,6,7,8\}$, indicating that at the nucleation point the bulk defects are distributed nearly equiprobably among the four allowed types. The single $z=9$ defect present in this small system	formally extends the bulk alphabet to $\{5,6,7,8,9\}$ with a theoretical bound $\ln5\approx1.609438$; however, this isolated defect contributes negligibly to the entropy, and the four-state bound $\ln4$ remains the appropriate reference.
	
	Beyond $N=56$, the system undergoes a smooth structural crossover toward a polycrystalline state. Rather than forming a monolithic Wigner crystal, the interior organizes into a mosaic of crystalline grains with local hexagonal order, separated by grain boundaries (dislocation scars). As $N$ increases, the fraction of 6-coordinated Voronoi cells grows as a power law, reaching $\approx 87\%$ at $N=10^5$ \cite{Lavrov2026_100000SM}. Consequently, the remaining topological defects are progressively confined to the stabilizing network of grain boundaries and the disordered edge layer, leading to the monotonic decay $H_{\rm bulk}\to0$ --- the definitive information-theoretic hallmark of this polycrystalline crystallization.
		
	\subsection{Boundary entropy: saturation at the Shannon bound}
	
	Figure~\ref{fig:decomp}(c) shows $H_{\rm edge}(N)$: it rises steeply for $N<1000$ and then saturates,
	\[
		H_{\rm edge}(N)\xrightarrow{N>1000}0.692\pm0.001\approx\ln2. \label{eq:saturation}
	\]
	For every finite system, the populations of $z=3$ and $z=4$ are not exactly equal: the deviation $\delta = n_3 - N_E/2$ is nonzero, and the resulting $H_{\rm edge}$ lies within $0.001$--$0.002$ of the upper bound $\ln2$ rather than attaining it exactly. We emphasize that the bound itself is a theorem, whereas the near-saturation of this bound is Observation~\ref{obs:equiprob}.
	
	\begin{figure}[t]
		\centering
		\includegraphics[width=0.8\columnwidth]{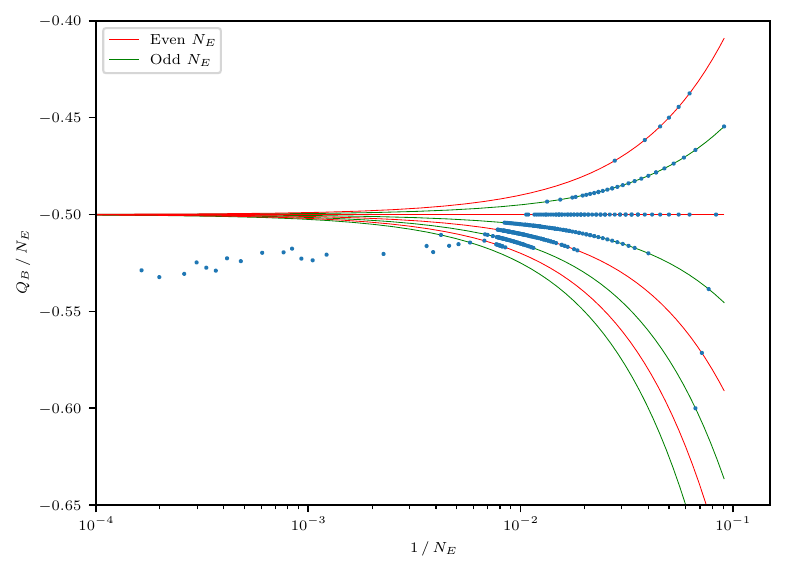}
		\caption{Scaling plot of $Q_B/N_E$ vs.\ $1/N_E$: according to Eq.~\ref{eq:QBNE}, all rays converge to $-1/2$, confirming the saturation prediction of Corollary~\ref{cor:saturation}. An alternation of red and green rays can be observed on the plot. Red rays represent systems with an even number of particles at the boundary, and green rays represent systems with an odd number of particles at the boundary.}
		\label{fig:qbne}
	\end{figure}
	
	\subsection{Scaling of $Q_B/N_E$: convergence to $-1/2$}
	
	Figure~\ref{fig:qbne} plots $Q_B/N_E$ against $1/N_E$. Because $n_3$ is an integer, the points lie on discrete rays
	\begin{equation}
		\label{eq:QBNE}
		\frac{Q_B}{N_E}=-\frac12+\frac{6-\delta}{N_E},\qquad \delta\equiv n_3-\frac{N_E}{2},
	\end{equation}
	emanating from $-1/2$ at $1/N_E=0$. For even $N_E$, $\delta$ is an integer and the ray slopes are integers; for odd $N_E$, $\delta$ is a half-integer and the slopes are half-integers. All rays converge to $-1/2$, confirming the saturation prediction of Corollary~\ref{cor:saturation}. The uppermost occupied ray corresponds to the minimal deviation $\delta_{\min}\approx5$, reflecting the topological $+6$ in the sum rule; exact equiprobability ($\delta=0$) is never realized at finite $N$, yet $\delta/N_E\to0$ ensures $p_{\rm edge}(3)\to1/2$ and hence $H_{\rm edge}\to\ln2$.
	
	\section{Discussion}
	\label{sec:discussion}

	\subsection{The two exact identities as a universal framework}
	
	Equations for the entropy balance (Theorem~\ref{thm:balance}) and the charge sum rule (Theorem~\ref{thm:sumrule}) are the central results. The entropy balance separates the total topological complexity into three physically transparent contributions --- the interior disorder (vanishing in the thermodynamic limit), the boundary disorder (saturating at $\ln2$), and the mixing entropy (peaking at $\ln2$ when $N_B=N_E$). The sum rule ties the total charge to the boundary size and, through Corollaries~\ref{cor:n3QB} and~\ref{cor:n4QB}, converts boundary statistics into a quantitative statement about the bulk. Together they elevate Shannon entropy from a descriptive statistic to a first-principles order parameter linked to topology by exact identities.
	
	\subsection{On the existence of an analytic solution}
	
	Our results have direct implications for the long-standing question of whether the Thomson problem admits a closed-form solution. The saturation of $H_{\rm edge}$ at $\ln2$ means that the one-particle distribution of boundary coordination numbers is maximally disordered. This is consistent with the heuristic expectation that the detailed boundary sequence cannot be generated by a short algorithm. We stress, however, that maximal one-particle entropy does not by itself constitute a proof of algorithmic incompressibility, which depends on the full correlation structure of the sequence. We therefore state this as a plausible conjecture supported by, but not derivable from, the present entropy analysis, and leave a rigorous treatment to future work.
	
	\subsection{Universality and extensions}

	The exact identities established above admit a further consequence: the Shannon entropy of a disk configuration can be written as an explicit function of a small number of integer topological parameters. For	intermediate and large $N$ the bulk coordination numbers belong to $\{5,6,7,8\}$ (Observation~\ref{obs:disjoint}). Denoting by $n_z$ the number of bulk particles with coordination number $z$, particle conservation in the bulk and the charge constraint of Corollary~\ref{cor:n3QB} yield two exact equations:
	\begin{equation}
		n_5+n_6+n_7+n_8 = N_B, \label{eq:bulkN}
	\end{equation}
	\begin{equation}
		n_5-n_7-2n_8 = 6-n_3, \label{eq:bulkQ}
	\end{equation}
	where $N_B=N-N_E$. Equation~\eqref{eq:bulkN} expresses particle conservation in the bulk; Eq.~\eqref{eq:bulkQ} is the bulk charge $Q_B=\sum_{i\in B}(6-z_i)$ written in terms of the individual coordination populations, with $Q_B=6-n_3$ from Corollary~\ref{cor:n3QB}. These two equations fix two of the four bulk populations in terms of the other two. Introducing the free topological parameters $k\equiv n_7$ and $m\equiv n_8$, one obtains
	\begin{equation}
			n_5=(6-n_3)+k+2m,
			\label{eq:diskparam5}
	\end{equation}
	\begin{equation}
		n_6=N_B-(6-n_3)-2k-3m. \label{eq:diskparam6}
	\end{equation}
	Substituting into the definition of the bulk entropy and combining with	the edge and mixing contributions via Theorem~\ref{thm:balance}, the total entropy of the disk configuration takes the exact form
	\begin{equation}
		H_{\rm total}^{\rm disk}
		=p_B\,H_{\rm bulk}
		+p_E\,H_{\rm edge}
		+H_{\rm mix},\label{eq:Hdisk}		
	\end{equation}
	where
	\begin{alignat}{2}
		H_{\rm bulk} =
		&-\frac{(6-n_3)+k+2m}{N_B}\ln\frac{(6-n_3)+k+2m}{N_B}\nonumber\\[6pt]		&-\frac{N_B-(6-n_3)-2k-3m}{N_B}\ln\frac{N_B-(6-n_3)-2k-3m}{N_B}
		\label{eq:Hbulkdisk}\\[6pt]
		&-\frac{k}{N_B}\ln\frac{k}{N_B} -\frac{m}{N_B}\ln\frac{m}{N_B},\nonumber\\[6pt]
		H_{\rm edge} =
		&-\frac{n_3}{N_E}\ln\frac{n_3}{N_E} -\frac{N_E-n_3}{N_E}\ln\frac{N_E-n_3}{N_E},
		\label{eq:Hedgedisk}\\[6pt]
		H_{\rm mix}
		= &-p_B\,\ln p_B - p_E\,\ln p_E.
		\label{eq:Hmixdisk}
	\end{alignat}
	with the last two terms in~\eqref{eq:Hbulkdisk} omitted at $k=0$ or $m=0$. This identity is independent of global energy minimization and applies to any local minimum in the $\{3,4\}$/$\{5,6,7,8\}$ sector. Thus, in the disk, the nontrivial entropy structure is determined by the boundary and by the mixing term, with two free bulk parameters $k$ and $m$ reflecting the defect structure of the polycrystalline interior.
	
	The framework is not specific to the disk. For a closed spherical geometry the three-dimensional bulk population vanishes, $p_B=0$; all particles and all topological charge reside on the confining surface, $p_{\rm surf}=1$. Intrinsically, this closed surface has no edge, so $H_{\rm mix}=0$ and the balance relation reduces to $H_{\rm total}=H_{\rm surf}$. The disk charge sum rule is replaced by the closed-surface Euler rule $\sum_i q_i=12$. Our preliminary analysis of the Thomson problem on the sphere shows that this closed-surface analogue admits an exact information-theoretic parametrization. For configurations whose coordination numbers belong to $\{5,6,7\}$, the Euler rule gives $n_5-n_7=12$. Writing $k\equiv n_7$ for the number of $5$--$7$ dislocation pairs, one obtains
	\begin{equation}
		n_5=12+k,
	\end{equation}
	\begin{equation}
		n_6=N-12-2k,
	\end{equation}
	and therefore the Shannon entropy is exactly
	\begin{equation}
		\begin{split}
			H_{\rm total}^{\rm sph}(N,k) =
			&-\frac{12+k}{N}\ln\frac{12+k}{N} \\[3pt]
			&-\frac{N-12-2k}{N}\ln\frac{N-12-2k}{N} \\[3pt]
			&-\frac{k}{N}\ln\frac{k}{N},
			\label{eq:hsphere}
		\end{split}
	\end{equation}
	with the last term omitted at $k=0$. This identity is independent of global energy minimization and applies to any local minimum in the $5$--$6$--$7$ sector. The defect-free branch $k=0$ has an exact maximum at $N=24$, where $p_5=p_6=1/2$ and
	\[
	H_{\rm total}^{\rm sph}(24,0)=\ln2.
	\]
	In the known low-energy configurations, including our calculations up to $N=10^4$ \cite{Lavrov2026_10000ND}, which extend beyond the largest previously published spherical systems ($N=4352$) \cite{Wales2006,Wales2009}, the higher-charge coordinations $z=\{4,8\}$ are confined to small $N$, whereas at larger $N$ the data reveal a hierarchy of step-like changes in the selected topological sector $k$: pure $5$--$6$ configurations are progressively replaced by states containing $5$--$7$ pairs. This is the information-theoretic signature of dislocation scars on a curved crystal \cite{Bausch2003,Irvine2010}. Thus, in the disk the nontrivial entropy structure is generated by the boundary and by the mixing term, whereas on the sphere it is generated by curvature-constrained defect sectors; in both geometries Shannon entropy is governed by exact topological identities. For a convex confining geometry with $\mathcal{M}$ allowed boundary coordination numbers, information theory gives only the bound $H_{\rm edge}\le\ln\mathcal{M}$, saturated when all allowed boundary states are equiprobable. Our disk result realizes this saturation for $\mathcal{M}=2$. We conjecture that the same maximal-disorder boundary saturation occurs for other convex confining geometries, although this is not implied by the present theorems.
	
	The information-theoretic framework established here --- the entropy decomposition of Theorem~\ref{thm:balance} and the boundary-saturation mechanism --- is not tied to two dimensions and is expected to apply to higher-dimensional analogues, with the topological charge sum rule	replaced by the appropriate Euler-characteristic identity of the confining geometry and the bound $H_{\rm edge}\le\ln2$ generalized to $H_{\rm edge}\le\ln\mathcal{M}$, where $\mathcal{M}$ is the number of allowed boundary coordination numbers.
	
	The theory of topological defects on frozen topologies has been developed extensively \cite{Bowick2000,Bowick2002}, including the role of density inhomogeneity in flat geometry \cite{Yao2013} and the structure of grain boundary scars \cite{Irvine2012}. The $N^{2/3}$ scaling of the number of boundary charges, first established by Worley \cite{Worley2006} and confirmed independently \cite{Moore2007,Amore2023}, underlies the sum rule of Theorem~\ref{thm:sumrule}: the $2N_E$ term reflects the boundary contribution to the total topological charge, while the constant $+6$ is the Euler characteristic of the disk. Our exact results provide the information-theoretic complement to this topological picture: whereas the sum rule fixes the \emph{total} charge, the entropy balance and the boundary--bulk constraints determine how this charge is \emph{distributed} between the boundary and the bulk.
	
	\section{Conclusions}
	\label{sec:conclusions}
	
	We have established a complete information-theoretic framework for the two-dimensional Thomson problem in a disk. Our principal results are:
	\begin{enumerate}
		\item The exact entropy balance relation $H_{\rm total}=p_BH_{\rm bulk}+p_EH_{\rm edge}+H_{\rm mix}$, where $H_{\rm mix}=-p_B\ln p_B-p_E\ln p_E$, proven (Theorem~\ref{thm:balance}) and verified for all computed $N=12$--$10^5$.
		
		\item The topological charge sum rule $\sum_iq_i=2N_E+6$, proven (Theorem~\ref{thm:sumrule}) and verified with exact zero residual for all computed $N=12$--$10^5$.
		
		\item The exact boundary--bulk constraints $n_3=6-Q_B$ and $n_4=N_E+Q_B-6$ (Corollaries~\ref{cor:n3QB} and~\ref{cor:n4QB}) and the exact total boundary charge $Q_E=2N_E+n_3=2N_E+6-Q_B$ (Remark~\ref{rmk:c&c}). This shows that the single quantity $Q_B$ determines both boundary populations $n_3,n_4$ and the boundary charge $Q_E$.
		
		\item The rigorous bound $H_{\rm edge}\le\ln2$ (Theorem~\ref{thm:bound}), shown numerically to be attained, and proven equivalent to $Q_B/N_E\to-1/2$ (Corollary~\ref{cor:saturation}). Together with the charge constraint $6-N_E\leq Q_B\leq 6$ of Remark~\ref{rmk:c&c}, Corollary~\ref{cor:saturation} yields the self-consistent bound $N_E\geq12$ (Remark~\ref{rem:selfcons}): the theory itself determines the minimum boundary size below which the saturation regime cannot exist.
		
		\item The identification of $N=56$ as the bulk nucleation point, $N=146$ as the point of maximal global disorder, and $N\sim10^3$ as the onset of boundary saturation.
	\end{enumerate}
	These identities provide a quantitative, verifiable link between the boundary and bulk topology, and elevate the Shannon entropy to the rank of an order parameter determined from first principles for confined Coulomb systems.
	
	\section*{Acknowledgments}
	
	The authors are grateful to Professor R.\,G.~Nazmitdinov and P.\,A.~Maksimov for valuable discussions and comments on the manuscript. This work was supported by the Joint Institute for Nuclear Research (JINR) under Project No.~06-6-1119-2-2024/2026.

	\clearpage
	
	\newgeometry{left=10mm, right=10mm, top=20mm, bottom=20mm}
	\begin{center}
		\section*{SUPPLEMENTAL MATERIAL: Full verification of the framework}
	\end{center}
	\vspace{\baselineskip}
	\begin{center}
		\begin{minipage}{0.75\textwidth}
			The data in Table~\ref{tab:suppvf} is organized as follows: $N$ is the number of system particles; $N_E$ is the number of boundary (edge) particles; $p_B$ and $p_E$ are the probabilities of belonging to the bulk and boundary; $H_{\rm total}$, $H_{\rm bulk}$, $H_{\rm edge}$, and $H_{\rm mix}$ are the total, bulk, edge and mixing entropies; $\Delta = H_{\rm total} - (p_B\,H_{\rm bulk} + p_E\,H_{\rm edge})\equiv H_{\rm mix}$; $n_3$ and $n_4$ are the number of particles with $z=3$ and $z=4$ calculated by using the Delaunay triangulation; $Q_B$, $Q_E$ and $Q_{\rm total}$ are the bulk, edge and total topological charges calculated by using the Voronoi tessellation. $Q_B\equiv\sum_{i\in B}q_i$ value can be verified using the formulas from Corollaries~\ref{cor:n3QB} and~\ref{cor:n4QB}; $Q_E\equiv\sum_{i\in E}q_i$ value can be verified using the formulas from Remark~\ref{rmk:c&c}; $Q_{\rm total}\equiv\sum_{i=1}^{N}q_i$ value can be verified using the formula from Theorem~\ref{thm:sumrule}.
		\end{minipage}
	\end{center}
	\hspace*{-1cm}
	\begin{longtable}{cccccccccccccccc}
		\caption{Full verification of the framework} \label{tab:suppvf}\\
		\hline
		\hline
		$N$ & $N_E$ & $p_B$ & $p_E$ & $H_{\rm total}$ & $H_{\rm bulk}$ & $H_{\rm edge}$ & $H_{\rm mix}$ & $\Delta$ & $n_3$ & $n_4$ & $Q_B$ & $Q_E$ & $Q_{\rm total}$ \\
		\hline
		\endfirsthead
		\hline
		$N$ & $N_E$ & $p_B$ & $p_E$ & $H_{\rm total}$ & $H_{\rm bulk}$ & $H_{\rm edge}$ & $H_{\rm mix}$ & $\Delta$ & $n_3$ & $n_4$ & $Q_B$ & $Q_E$ & $Q_{\rm total}$ \\
		\hline
		\endhead
		\hline
		\multicolumn{14}{r}{\textit{Table continued on the next page}} \\
		\endfoot
		\hline
		\endlastfoot
		12 & 11 & 0.083333 & 0.916667 & 0.286836 & 0.000000 & 0.000000 & 0.286836 & 0.286836 & 11 & 0 & -5 & 33 & 28 \\
		13 & 12 & 0.076923 & 0.923077 & 0.271189 & 0.000000 & 0.000000 & 0.271189 & 0.271189 & 12 & 0 & -6 & 36 & 30 \\
		14 & 13 & 0.071429 & 0.928571 & 0.257319 & 0.000000 & 0.000000 & 0.257319 & 0.257319 & 13 & 0 & -7 & 39 & 32 \\
		15 & 14 & 0.066667 & 0.933333 & 0.244930 & 0.000000 & 0.000000 & 0.244930 & 0.244930 & 14 & 0 & -8 & 42 & 34 \\
		16 & 15 & 0.062500 & 0.937500 & 0.233792 & 0.000000 & 0.000000 & 0.233792 & 0.233792 & 15 & 0 & -9 & 45 & 36 \\
		17 & 15 & 0.117647 & 0.882353 & 0.790235 & 0.693147 & 0.392674 & 0.362211 & 0.362211 & 13 & 2 & -7 & 43 & 36 \\
		18 & 16 & 0.111111 & 0.888889 & 0.683739 & 0.000000 & 0.376770 & 0.348832 & 0.348832 & 14 & 2 & -8 & 46 & 38 \\
		19 & 16 & 0.157895 & 0.842105 & 0.943046 & 0.636514 & 0.482578 & 0.436162 & 0.436162 & 13 & 3 & -7 & 45 & 38 \\
		20 & 17 & 0.150000 & 0.850000 & 0.914286 & 0.636514 & 0.465999 & 0.422709 & 0.422709 & 14 & 3 & -8 & 48 & 40 \\
		21 & 18 & 0.142857 & 0.857143 & 0.796312 & 0.000000 & 0.450561 & 0.410116 & 0.410116 & 15 & 3 & -9 & 51 & 42 \\
		22 & 18 & 0.181818 & 0.818182 & 0.907535 & 0.000000 & 0.529706 & 0.474139 & 0.474139 & 14 & 4 & -8 & 50 & 42 \\
		23 & 19 & 0.173913 & 0.826087 & 0.984983 & 0.562335 & 0.514653 & 0.462037 & 0.462037 & 15 & 4 & -9 & 53 & 44 \\
		24 & 20 & 0.166667 & 0.833333 & 0.983088 & 0.693147 & 0.500402 & 0.450561 & 0.450561 & 16 & 4 & -10 & 56 & 46 \\
		25 & 20 & 0.200000 & 0.800000 & 1.161255 & 1.054920 & 0.562335 & 0.500402 & 0.500402 & 15 & 5 & -9 & 55 & 46 \\
		26 & 21 & 0.192308 & 0.807692 & 1.135743 & 1.054920 & 0.548874 & 0.489552 & 0.489552 & 16 & 5 & -10 & 58 & 48 \\
		27 & 22 & 0.185185 & 0.814815 & 1.008541 & 0.500402 & 0.535960 & 0.479166 & 0.479166 & 17 & 5 & -11 & 61 & 50 \\
		28 & 22 & 0.214286 & 0.785714 & 1.215388 & 1.098612 & 0.585953 & 0.519580 & 0.519580 & 16 & 6 & -10 & 60 & 50 \\
		29 & 23 & 0.206897 & 0.793103 & 1.174285 & 1.011404 & 0.573964 & 0.509816 & 0.509816 & 17 & 6 & -11 & 63 & 52 \\
		30 & 23 & 0.233333 & 0.766667 & 1.169118 & 0.796312 & 0.573964 & 0.543273 & 0.543273 & 17 & 6 & -11 & 63 & 52 \\
		31 & 24 & 0.225806 & 0.774194 & 1.062123 & 0.410116 & 0.562335 & 0.534159 & 0.534159 & 18 & 6 & -12 & 66 & 54 \\
		32 & 25 & 0.218750 & 0.781250 & 1.130046 & 0.796312 & 0.551080 & 0.525321 & 0.525321 & 19 & 6 & -13 & 69 & 56 \\
		33 & 25 & 0.242424 & 0.757576 & 1.171101 & 0.693147 & 0.592953 & 0.553858 & 0.553858 & 18 & 7 & -12 & 68 & 56 \\
		34 & 26 & 0.235294 & 0.764706 & 1.146692 & 0.661563 & 0.582492 & 0.545595 & 0.545595 & 19 & 7 & -13 & 71 & 58 \\
		35 & 26 & 0.257143 & 0.742857 & 1.192245 & 0.636514 & 0.617242 & 0.570047 & 0.570047 & 18 & 8 & -12 & 70 & 58 \\
		36 & 27 & 0.250000 & 0.750000 & 1.189846 & 0.686962 & 0.607693 & 0.562335 & 0.562335 & 19 & 8 & -13 & 73 & 60 \\
		37 & 28 & 0.243243 & 0.756757 & 1.174633 & 0.686962 & 0.598270 & 0.554790 & 0.554790 & 20 & 8 & -14 & 76 & 62 \\
		38 & 28 & 0.263158 & 0.736842 & 1.267236 & 0.950271 & 0.598270 & 0.576334 & 0.576334 & 20 & 8 & -14 & 76 & 62 \\
		39 & 29 & 0.256410 & 0.743590 & 1.335413 & 1.279854 & 0.589003 & 0.569269 & 0.569269 & 21 & 8 & -15 & 79 & 64 \\
		40 & 29 & 0.275000 & 0.725000 & 1.246092 & 0.759547 & 0.619376 & 0.588169 & 0.588169 & 20 & 9 & -14 & 78 & 64 \\
		41 & 30 & 0.268293 & 0.731707 & 1.279319 & 0.934770 & 0.610864 & 0.581553 & 0.581553 & 21 & 9 & -15 & 81 & 66 \\
		42 & 30 & 0.285714 & 0.714286 & 1.331664 & 1.039721 & 0.610864 & 0.598270 & 0.598270 & 21 & 9 & -15 & 81 & 66 \\
		43 & 31 & 0.279070 & 0.720930 & 1.379414 & 1.265001 & 0.602440 & 0.592073 & 0.592073 & 22 & 9 & -16 & 84 & 68 \\
		44 & 31 & 0.295455 & 0.704545 & 1.340697 & 0.983961 & 0.628799 & 0.606964 & 0.606964 & 21 & 10 & -15 & 83 & 68 \\
		45 & 32 & 0.288889 & 0.711111 & 1.335266 & 1.012331 & 0.621086 & 0.601154 & 0.601154 & 22 & 10 & -16 & 86 & 70 \\
		46 & 32 & 0.304348 & 0.695652 & 1.374952 & 1.078992 & 0.621086 & 0.614503 & 0.614503 & 22 & 10 & -16 & 86 & 70 \\
		47 & 33 & 0.297872 & 0.702128 & 1.437026 & 1.333736 & 0.613410 & 0.609051 & 0.609051 & 23 & 10 & -17 & 89 & 72 \\
		48 & 33 & 0.312500 & 0.687500 & 1.374194 & 1.009614 & 0.636514 & 0.621086 & 0.621086 & 22 & 11 & -16 & 88 & 72 \\
		49 & 34 & 0.306122 & 0.693878 & 1.372277 & 1.043757 & 0.629501 & 0.615963 & 0.615963 & 23 & 11 & -17 & 91 & 74 \\
		50 & 34 & 0.320000 & 0.680000 & 1.453336 & 1.245017 & 0.629501 & 0.626869 & 0.626869 & 23 & 11 & -17 & 91 & 74 \\
		51 & 35 & 0.313725 & 0.686275 & 1.429797 & 1.213008 & 0.622487 & 0.622049 & 0.622049 & 24 & 11 & -18 & 94 & 76 \\
		52 & 35 & 0.326923 & 0.673077 & 1.419494 & 1.085228 & 0.642912 & 0.631978 & 0.631978 & 23 & 12 & -17 & 93 & 76 \\
		53 & 36 & 0.320755 & 0.679245 & 1.398215 & 1.055102 & 0.636514 & 0.627437 & 0.627437 & 24 & 12 & -18 & 96 & 78 \\
		54 & 36 & 0.333333 & 0.666667 & 1.418093 & 1.036628 & 0.654055 & 0.636514 & 0.636514 & 23 & 13 & -17 & 95 & 78 \\
		55 & 37 & 0.327273 & 0.672727 & 1.393991 & 0.995027 & 0.648279 & 0.632230 & 0.632230 & 24 & 13 & -18 & 98 & 80 \\
		56 & 37 & 0.339286 & 0.660714 & 1.525235 & 1.380452 & 0.630086 & 0.640561 & 0.640561 & 25 & 12 & -19 & 99 & 80 \\
		57 & 37 & 0.350877 & 0.649123 & 1.447933 & 1.080528 & 0.648279 & 0.647988 & 0.647988 & 24 & 13 & -18 & 98 & 80 \\
		58 & 38 & 0.344828 & 0.655172 & 1.433046 & 1.067094 & 0.642422 & 0.644186 & 0.644186 & 25 & 13 & -19 & 101 & 82 \\
		59 & 39 & 0.338983 & 0.661017 & 1.430224 & 1.088900 & 0.636514 & 0.640359 & 0.640359 & 26 & 13 & -20 & 104 & 84 \\
		60 & 39 & 0.350000 & 0.650000 & 1.426157 & 1.012495 & 0.652826 & 0.647447 & 0.647447 & 25 & 14 & -19 & 103 & 84 \\
		61 & 40 & 0.344262 & 0.655738 & 1.444238 & 1.091786 & 0.647447 & 0.643822 & 0.643822 & 26 & 14 & -20 & 106 & 86 \\
		62 & 40 & 0.354839 & 0.645161 & 1.454894 & 1.090060 & 0.647447 & 0.650391 & 0.650391 & 26 & 14 & -20 & 106 & 86 \\
		63 & 41 & 0.349206 & 0.650794 & 1.443431 & 1.084363 & 0.642001 & 0.646954 & 0.646954 & 27 & 14 & -21 & 109 & 88 \\
		64 & 41 & 0.359375 & 0.640625 & 1.459992 & 1.074721 & 0.656712 & 0.653058 & 0.653058 & 26 & 15 & -20 & 108 & 88 \\
		65 & 42 & 0.353846 & 0.646154 & 1.441253 & 1.046563 & 0.651757 & 0.649795 & 0.649795 & 27 & 15 & -21 & 111 & 90 \\
		66 & 42 & 0.363636 & 0.636364 & 1.515400 & 1.224202 & 0.651757 & 0.655482 & 0.655482 & 27 & 15 & -21 & 111 & 90 \\
		67 & 43 & 0.358209 & 0.641791 & 1.504092 & 1.218980 & 0.646724 & 0.652381 & 0.652381 & 28 & 15 & -22 & 114 & 92 \\
		68 & 43 & 0.367647 & 0.632353 & 1.470972 & 1.076819 & 0.660060 & 0.657692 & 0.657692 & 27 & 16 & -21 & 113 & 92 \\
		69 & 43 & 0.376812 & 0.623188 & 1.511634 & 1.161881 & 0.660060 & 0.662482 & 0.662482 & 27 & 16 & -21 & 113 & 92 \\
		70 & 44 & 0.371429 & 0.628571 & 1.437200 & 0.983961 & 0.655482 & 0.659712 & 0.659712 & 28 & 16 & -22 & 116 & 94 \\
		71 & 44 & 0.380282 & 0.619718 & 1.477625 & 1.070808 & 0.655482 & 0.664202 & 0.664202 & 28 & 16 & -22 & 116 & 94 \\
		72 & 45 & 0.375000 & 0.625000 & 1.468801 & 1.067939 & 0.650818 & 0.661563 & 0.661563 & 29 & 16 & -23 & 119 & 96 \\
		73 & 45 & 0.383562 & 0.616438 & 1.479113 & 1.074516 & 0.650818 & 0.665781 & 0.665781 & 29 & 16 & -23 & 119 & 96 \\
		74 & 45 & 0.391892 & 0.608108 & 1.534192 & 1.177494 & 0.662966 & 0.669587 & 0.669587 & 28 & 17 & -22 & 118 & 96 \\
		75 & 46 & 0.386667 & 0.613333 & 1.509685 & 1.133882 & 0.658724 & 0.667234 & 0.667234 & 29 & 17 & -23 & 121 & 98 \\
		76 & 47 & 0.381579 & 0.618421 & 1.522630 & 1.187459 & 0.654391 & 0.664832 & 0.664832 & 30 & 17 & -24 & 124 & 100 \\
		77 & 47 & 0.389610 & 0.610390 & 1.484168 & 1.068145 & 0.654391 & 0.668573 & 0.668573 & 30 & 17 & -24 & 124 & 100 \\
		78 & 47 & 0.397436 & 0.602564 & 1.463592 & 0.982864 & 0.665504 & 0.671958 & 0.671958 & 29 & 18 & -23 & 123 & 100 \\
		79 & 48 & 0.392405 & 0.607595 & 1.483671 & 1.049672 & 0.661563 & 0.669812 & 0.669812 & 30 & 18 & -24 & 126 & 102 \\
		80 & 48 & 0.400000 & 0.600000 & 1.570412 & 1.251156 & 0.661563 & 0.673012 & 0.673012 & 30 & 18 & -24 & 126 & 102 \\
		81 & 49 & 0.395062 & 0.604938 & 1.556694 & 1.235178 & 0.657529 & 0.670958 & 0.670958 & 31 & 18 & -25 & 129 & 104 \\
		82 & 49 & 0.402439 & 0.597561 & 1.602422 & 1.330687 & 0.657529 & 0.673988 & 0.673988 & 31 & 18 & -25 & 129 & 104 \\
		83 & 49 & 0.409639 & 0.590361 & 1.447188 & 0.918511 & 0.667733 & 0.676727 & 0.676727 & 30 & 19 & -24 & 128 & 104 \\
		84 & 50 & 0.404762 & 0.595238 & 1.488988 & 1.034724 & 0.664064 & 0.674895 & 0.674895 & 31 & 19 & -25 & 131 & 106 \\
		85 & 50 & 0.411765 & 0.588235 & 1.595012 & 1.279595 & 0.664064 & 0.677494 & 0.677494 & 31 & 19 & -25 & 131 & 106 \\
		86 & 51 & 0.406977 & 0.593023 & 1.577235 & 1.252955 & 0.660298 & 0.675739 & 0.675739 & 32 & 19 & -26 & 134 & 108 \\
		87 & 51 & 0.413793 & 0.586207 & 1.466540 & 0.956386 & 0.669703 & 0.678209 & 0.678209 & 31 & 20 & -25 & 133 & 108 \\
		88 & 52 & 0.409091 & 0.590909 & 1.566916 & 1.214108 & 0.666278 & 0.676526 & 0.676526 & 32 & 20 & -26 & 136 & 110 \\
		89 & 52 & 0.415730 & 0.584270 & 1.466043 & 0.957063 & 0.666278 & 0.678876 & 0.678876 & 32 & 20 & -26 & 136 & 110 \\
		90 & 52 & 0.422222 & 0.577778 & 1.508093 & 1.047157 & 0.666278 & 0.680999 & 0.680999 & 32 & 20 & -26 & 136 & 110 \\
		91 & 53 & 0.417582 & 0.582418 & 1.532485 & 1.118304 & 0.662756 & 0.679500 & 0.679500 & 33 & 20 & -27 & 139 & 112 \\
		92 & 53 & 0.423913 & 0.576087 & 1.492469 & 1.012331 & 0.662756 & 0.681524 & 0.681524 & 33 & 20 & -27 & 139 & 112 \\
		93 & 54 & 0.419355 & 0.580645 & 1.498087 & 1.037953 & 0.659153 & 0.680083 & 0.680083 & 34 & 20 & -28 & 142 & 114 \\
		94 & 54 & 0.425532 & 0.574468 & 1.509140 & 1.041609 & 0.668248 & 0.682015 & 0.682015 & 33 & 21 & -27 & 141 & 114 \\
		95 & 54 & 0.431579 & 0.568421 & 1.538525 & 1.100434 & 0.668248 & 0.683755 & 0.683755 & 33 & 21 & -27 & 141 & 114 \\
		96 & 55 & 0.427083 & 0.572917 & 1.520763 & 1.070817 & 0.664947 & 0.682475 & 0.682475 & 34 & 21 & -28 & 144 & 116 \\
		97 & 55 & 0.432990 & 0.567010 & 1.497222 & 0.996509 & 0.673012 & 0.684139 & 0.684139 & 33 & 22 & -27 & 143 & 116 \\
		98 & 56 & 0.428571 & 0.571429 & 1.572933 & 1.194641 & 0.661563 & 0.682908 & 0.682908 & 35 & 21 & -29 & 147 & 118 \\
		99 & 56 & 0.434343 & 0.565657 & 1.598999 & 1.232904 & 0.670009 & 0.684501 & 0.684501 & 34 & 22 & -28 & 146 & 118 \\
		100 & 56 & 0.440000 & 0.560000 & 1.593714 & 1.210406 & 0.670009 & 0.685930 & 0.685930 & 34 & 22 & -28 & 146 & 118 \\
		101 & 57 & 0.435644 & 0.564356 & 1.575249 & 1.179941 & 0.666909 & 0.684841 & 0.684841 & 35 & 22 & -29 & 149 & 120 \\
		102 & 57 & 0.441176 & 0.558824 & 1.576775 & 1.164360 & 0.674409 & 0.686211 & 0.686211 & 34 & 23 & -28 & 148 & 120 \\
		103 & 58 & 0.436893 & 0.563107 & 1.560346 & 1.137598 & 0.671589 & 0.685161 & 0.685161 & 35 & 23 & -29 & 151 & 122 \\
		104 & 58 & 0.442308 & 0.557692 & 1.620149 & 1.264127 & 0.671589 & 0.686476 & 0.686476 & 35 & 23 & -29 & 151 & 122 \\
		105 & 59 & 0.438095 & 0.561905 & 1.609276 & 1.251058 & 0.668672 & 0.685463 & 0.685463 & 36 & 23 & -30 & 154 & 124 \\
		106 & 59 & 0.443396 & 0.556604 & 1.545555 & 1.097538 & 0.668672 & 0.686725 & 0.686725 & 36 & 23 & -30 & 154 & 124 \\
		107 & 59 & 0.448598 & 0.551402 & 1.646585 & 1.315262 & 0.668672 & 0.687854 & 0.687854 & 36 & 23 & -30 & 154 & 124 \\
		108 & 60 & 0.444444 & 0.555556 & 1.647515 & 1.329154 & 0.665672 & 0.686962 & 0.686962 & 37 & 23 & -31 & 157 & 126 \\
		109 & 60 & 0.449541 & 0.550459 & 1.619089 & 1.246999 & 0.673012 & 0.688046 & 0.688046 & 36 & 24 & -30 & 156 & 126 \\
		110 & 60 & 0.454545 & 0.545455 & 1.594977 & 1.185516 & 0.673012 & 0.689009 & 0.689009 & 36 & 24 & -30 & 156 & 126 \\
		111 & 61 & 0.450450 & 0.549550 & 1.616947 & 1.244034 & 0.670263 & 0.688229 & 0.688229 & 37 & 24 & -31 & 159 & 128 \\
		112 & 61 & 0.455357 & 0.544643 & 1.648454 & 1.305007 & 0.670263 & 0.689156 & 0.689156 & 37 & 24 & -31 & 159 & 128 \\
		113 & 62 & 0.451327 & 0.548673 & 1.625518 & 1.264967 & 0.667432 & 0.688402 & 0.688402 & 38 & 24 & -32 & 162 & 130 \\
		114 & 62 & 0.456140 & 0.543860 & 1.633971 & 1.275238 & 0.667432 & 0.689295 & 0.689295 & 38 & 24 & -32 & 162 & 130 \\
		115 & 62 & 0.460870 & 0.539130 & 1.498253 & 0.964779 & 0.674298 & 0.690082 & 0.690082 & 37 & 25 & -31 & 161 & 130 \\
		116 & 63 & 0.456897 & 0.543103 & 1.499643 & 0.974863 & 0.671703 & 0.689427 & 0.689427 & 38 & 25 & -32 & 164 & 132 \\
		117 & 63 & 0.461538 & 0.538462 & 1.641792 & 1.278159 & 0.671703 & 0.690186 & 0.690186 & 38 & 25 & -32 & 164 & 132 \\
		118 & 64 & 0.457627 & 0.542373 & 1.595077 & 1.185820 & 0.669027 & 0.689552 & 0.689552 & 39 & 25 & -33 & 167 & 134 \\
		119 & 64 & 0.462185 & 0.537815 & 1.590299 & 1.168800 & 0.669027 & 0.690284 & 0.690284 & 39 & 25 & -33 & 167 & 134 \\
		120 & 64 & 0.466667 & 0.533333 & 1.490160 & 0.940691 & 0.675465 & 0.690923 & 0.690923 & 38 & 26 & -32 & 166 & 134 \\
		121 & 64 & 0.471074 & 0.528926 & 1.585035 & 1.138443 & 0.675465 & 0.691473 & 0.691473 & 38 & 26 & -32 & 166 & 134 \\
		122 & 65 & 0.467213 & 0.532787 & 1.589710 & 1.156095 & 0.673012 & 0.690996 & 0.690996 & 39 & 26 & -33 & 169 & 136 \\
		123 & 65 & 0.471545 & 0.528455 & 1.533850 & 1.025627 & 0.678759 & 0.691527 & 0.691527 & 38 & 27 & -32 & 168 & 136 \\
		124 & 66 & 0.467742 & 0.532258 & 1.527769 & 1.018976 & 0.676526 & 0.691065 & 0.691065 & 39 & 27 & -33 & 171 & 138 \\
		125 & 66 & 0.472000 & 0.528000 & 1.611365 & 1.191908 & 0.676526 & 0.691578 & 0.691578 & 39 & 27 & -33 & 171 & 138 \\
		126 & 66 & 0.476190 & 0.523810 & 1.610625 & 1.184907 & 0.676526 & 0.692013 & 0.692013 & 39 & 27 & -33 & 171 & 138 \\
		127 & 67 & 0.472441 & 0.527559 & 1.604526 & 1.179442 & 0.674203 & 0.691627 & 0.691627 & 40 & 27 & -34 & 174 & 140 \\
		128 & 67 & 0.476562 & 0.523438 & 1.589578 & 1.142823 & 0.674203 & 0.692048 & 0.692048 & 40 & 27 & -34 & 174 & 140 \\
		129 & 68 & 0.472868 & 0.527132 & 1.584773 & 1.139791 & 0.671801 & 0.691674 & 0.691674 & 41 & 27 & -35 & 177 & 142 \\
		130 & 68 & 0.476923 & 0.523077 & 1.578749 & 1.116084 & 0.677494 & 0.692082 & 0.692082 & 40 & 28 & -34 & 176 & 142 \\
		131 & 68 & 0.480916 & 0.519084 & 1.607047 & 1.170583 & 0.677494 & 0.692419 & 0.692419 & 40 & 28 & -34 & 176 & 142 \\
		132 & 69 & 0.477273 & 0.522727 & 1.583344 & 1.127734 & 0.675292 & 0.692114 & 0.692114 & 41 & 28 & -35 & 179 & 144 \\
		133 & 69 & 0.481203 & 0.518797 & 1.600824 & 1.159685 & 0.675292 & 0.692440 & 0.692440 & 41 & 28 & -35 & 179 & 144 \\
		134 & 69 & 0.485075 & 0.514925 & 1.604803 & 1.163483 & 0.675292 & 0.692702 & 0.692702 & 41 & 28 & -35 & 179 & 144 \\
		135 & 70 & 0.481481 & 0.518519 & 1.607326 & 1.175321 & 0.673012 & 0.692461 & 0.692461 & 42 & 28 & -36 & 182 & 146 \\
		136 & 70 & 0.485294 & 0.514706 & 1.625112 & 1.207504 & 0.673012 & 0.692715 & 0.692715 & 42 & 28 & -36 & 182 & 146 \\
		137 & 70 & 0.489051 & 0.510949 & 1.578823 & 1.102744 & 0.678380 & 0.692907 & 0.692907 & 41 & 29 & -35 & 181 & 146 \\
		138 & 71 & 0.485507 & 0.514493 & 1.584233 & 1.119570 & 0.676290 & 0.692727 & 0.692727 & 42 & 29 & -36 & 184 & 148 \\
		139 & 71 & 0.489209 & 0.510791 & 1.621088 & 1.191170 & 0.676290 & 0.692914 & 0.692914 & 42 & 29 & -36 & 184 & 148 \\
		140 & 72 & 0.485714 & 0.514286 & 1.616538 & 1.188163 & 0.674122 & 0.692739 & 0.692739 & 43 & 29 & -37 & 187 & 150 \\
		141 & 72 & 0.489362 & 0.510638 & 1.615907 & 1.182670 & 0.674122 & 0.692921 & 0.692921 & 43 & 29 & -37 & 187 & 150 \\
		142 & 72 & 0.492958 & 0.507042 & 1.575035 & 1.090575 & 0.679193 & 0.693048 & 0.693048 & 42 & 30 & -36 & 186 & 150 \\
		143 & 73 & 0.489510 & 0.510490 & 1.579292 & 1.104488 & 0.677206 & 0.692927 & 0.692927 & 43 & 30 & -37 & 189 & 152 \\
		144 & 73 & 0.493056 & 0.506944 & 1.597304 & 1.137698 & 0.677206 & 0.693051 & 0.693051 & 43 & 30 & -37 & 189 & 152 \\
		145 & 73 & 0.496552 & 0.503448 & 1.636453 & 1.213149 & 0.677206 & 0.693123 & 0.693123 & 43 & 30 & -37 & 189 & 152 \\
		146 & 74 & 0.493151 & 0.506849 & 1.667500 & 1.282064 & 0.675143 & 0.693053 & 0.693053 & 44 & 30 & -38 & 192 & 154 \\
		147 & 74 & 0.496599 & 0.503401 & 1.577672 & 1.096822 & 0.675143 & 0.693124 & 0.693124 & 44 & 30 & -38 & 192 & 154 \\
		148 & 74 & 0.500000 & 0.500000 & 1.569721 & 1.073206 & 0.679941 & 0.693147 & 0.693147 & 43 & 31 & -37 & 191 & 154 \\
		149 & 75 & 0.496644 & 0.503356 & 1.574255 & 1.086955 & 0.678049 & 0.693125 & 0.693125 & 44 & 31 & -38 & 194 & 156 \\
		150 & 75 & 0.500000 & 0.500000 & 1.431991 & 0.795335 & 0.682353 & 0.693147 & 0.693147 & 43 & 32 & -37 & 193 & 156 \\
		151 & 76 & 0.496689 & 0.503311 & 1.622238 & 1.185516 & 0.676083 & 0.693125 & 0.693125 & 45 & 31 & -39 & 197 & 158 \\
		152 & 76 & 0.500000 & 0.500000 & 1.598117 & 1.129310 & 0.680629 & 0.693147 & 0.693147 & 44 & 32 & -38 & 196 & 158 \\
		153 & 76 & 0.503268 & 0.496732 & 1.627816 & 1.189939 & 0.676083 & 0.693126 & 0.693126 & 45 & 31 & -39 & 197 & 158 \\
		154 & 76 & 0.506494 & 0.493506 & 1.589135 & 1.105990 & 0.680629 & 0.693063 & 0.693063 & 44 & 32 & -38 & 196 & 158 \\
		155 & 77 & 0.503226 & 0.496774 & 1.543885 & 1.020487 & 0.678827 & 0.693126 & 0.693126 & 45 & 32 & -39 & 199 & 160 \\
		156 & 77 & 0.506410 & 0.493590 & 1.589132 & 1.107808 & 0.678827 & 0.693065 & 0.693065 & 45 & 32 & -39 & 199 & 160 \\
		157 & 78 & 0.503185 & 0.496815 & 1.590502 & 1.115008 & 0.676952 & 0.693127 & 0.693127 & 46 & 32 & -40 & 202 & 162 \\
		158 & 78 & 0.506329 & 0.493671 & 1.588877 & 1.109196 & 0.676952 & 0.693067 & 0.693067 & 46 & 32 & -40 & 202 & 162 \\
		159 & 78 & 0.509434 & 0.490566 & 1.595733 & 1.120212 & 0.676952 & 0.692969 & 0.692969 & 46 & 32 & -40 & 202 & 162 \\
		160 & 78 & 0.512500 & 0.487500 & 1.588022 & 1.098675 & 0.681266 & 0.692835 & 0.692835 & 45 & 33 & -39 & 201 & 162 \\
		161 & 79 & 0.509317 & 0.490683 & 1.594499 & 1.115384 & 0.679546 & 0.692974 & 0.692974 & 46 & 33 & -40 & 204 & 164 \\
		162 & 79 & 0.512346 & 0.487654 & 1.599905 & 1.123615 & 0.679546 & 0.692842 & 0.692842 & 46 & 33 & -40 & 204 & 164 \\
		163 & 80 & 0.509202 & 0.490798 & 1.617303 & 1.161983 & 0.677756 & 0.692978 & 0.692978 & 47 & 33 & -41 & 207 & 166 \\
		164 & 80 & 0.512195 & 0.487805 & 1.619628 & 1.163943 & 0.677756 & 0.692850 & 0.692850 & 47 & 33 & -41 & 207 & 166 \\
		165 & 80 & 0.515152 & 0.484848 & 1.536540 & 0.996320 & 0.681855 & 0.692688 & 0.692688 & 46 & 34 & -40 & 206 & 166 \\
		166 & 81 & 0.512048 & 0.487952 & 1.524270 & 0.975499 & 0.680212 & 0.692857 & 0.692857 & 47 & 34 & -41 & 209 & 168 \\
		167 & 81 & 0.514970 & 0.485030 & 1.587033 & 1.096008 & 0.680212 & 0.692699 & 0.692699 & 47 & 34 & -41 & 209 & 168 \\
		168 & 81 & 0.517857 & 0.482143 & 1.642726 & 1.201600 & 0.680212 & 0.692509 & 0.692509 & 47 & 34 & -41 & 209 & 168 \\
		169 & 81 & 0.520710 & 0.479290 & 1.600760 & 1.118572 & 0.680212 & 0.692289 & 0.692289 & 47 & 34 & -41 & 209 & 168 \\
		170 & 82 & 0.517647 & 0.482353 & 1.614186 & 1.148243 & 0.678501 & 0.692524 & 0.692524 & 48 & 34 & -42 & 212 & 170 \\
		171 & 82 & 0.520468 & 0.479532 & 1.629882 & 1.172674 & 0.682401 & 0.692309 & 0.692309 & 47 & 35 & -41 & 211 & 170 \\
		172 & 83 & 0.517442 & 0.482558 & 1.629331 & 1.175498 & 0.680831 & 0.692539 & 0.692539 & 48 & 35 & -42 & 214 & 172 \\
		173 & 83 & 0.520231 & 0.479769 & 1.593465 & 1.104308 & 0.680831 & 0.692328 & 0.692328 & 48 & 35 & -42 & 214 & 172 \\
		174 & 83 & 0.522989 & 0.477011 & 1.605665 & 1.125859 & 0.680831 & 0.692090 & 0.692090 & 48 & 35 & -42 & 214 & 172 \\
		175 & 84 & 0.520000 & 0.480000 & 1.606604 & 1.131239 & 0.679193 & 0.692347 & 0.692347 & 49 & 35 & -43 & 217 & 174 \\
		176 & 84 & 0.522727 & 0.477273 & 1.618670 & 1.152409 & 0.679193 & 0.692114 & 0.692114 & 49 & 35 & -43 & 217 & 174 \\
		177 & 84 & 0.525424 & 0.474576 & 1.587393 & 1.087592 & 0.682908 & 0.691854 & 0.691854 & 48 & 36 & -42 & 216 & 174 \\
		178 & 85 & 0.522472 & 0.477528 & 1.578889 & 1.074434 & 0.681406 & 0.692137 & 0.692137 & 49 & 36 & -43 & 219 & 176 \\
		179 & 85 & 0.525140 & 0.474860 & 1.618540 & 1.148427 & 0.681406 & 0.691883 & 0.691883 & 49 & 36 & -43 & 219 & 176 \\
		180 & 85 & 0.527778 & 0.472222 & 1.596378 & 1.104631 & 0.681406 & 0.691603 & 0.691603 & 49 & 36 & -43 & 219 & 176 \\
		181 & 86 & 0.524862 & 0.475138 & 1.603284 & 1.120974 & 0.679838 & 0.691910 & 0.691910 & 50 & 36 & -44 & 222 & 178 \\
		182 & 86 & 0.527473 & 0.472527 & 1.616332 & 1.144046 & 0.679838 & 0.691637 & 0.691637 & 50 & 36 & -44 & 222 & 178 \\
		183 & 86 & 0.530055 & 0.469945 & 1.605859 & 1.119446 & 0.683380 & 0.691340 & 0.691340 & 49 & 37 & -43 & 221 & 178 \\
		184 & 87 & 0.527174 & 0.472826 & 1.582320 & 1.077842 & 0.681941 & 0.691670 & 0.691670 & 50 & 37 & -44 & 224 & 180 \\
		185 & 87 & 0.529730 & 0.470270 & 1.593106 & 1.100156 & 0.678209 & 0.691378 & 0.691378 & 51 & 36 & -45 & 225 & 180 \\
		186 & 87 & 0.532258 & 0.467742 & 1.598874 & 1.106299 & 0.681941 & 0.691065 & 0.691065 & 50 & 37 & -44 & 224 & 180 \\
		187 & 87 & 0.534759 & 0.465241 & 1.594599 & 1.096949 & 0.681941 & 0.690729 & 0.690729 & 50 & 37 & -44 & 224 & 180 \\
		188 & 88 & 0.531915 & 0.468085 & 1.593982 & 1.098615 & 0.680438 & 0.691109 & 0.691109 & 51 & 37 & -45 & 227 & 182 \\
		189 & 88 & 0.534392 & 0.465608 & 1.633907 & 1.172006 & 0.680438 & 0.690780 & 0.690780 & 51 & 37 & -45 & 227 & 182 \\
		190 & 89 & 0.531579 & 0.468421 & 1.637470 & 1.181985 & 0.678876 & 0.691151 & 0.691151 & 52 & 37 & -46 & 230 & 184 \\
		191 & 89 & 0.534031 & 0.465969 & 1.594497 & 1.099810 & 0.678876 & 0.690829 & 0.690829 & 52 & 37 & -46 & 230 & 184 \\
		192 & 89 & 0.536458 & 0.463542 & 1.631906 & 1.165197 & 0.682441 & 0.690486 & 0.690486 & 51 & 38 & -45 & 229 & 184 \\
		193 & 90 & 0.533679 & 0.466321 & 1.637259 & 1.178270 & 0.680999 & 0.690877 & 0.690877 & 52 & 38 & -46 & 232 & 186 \\
		194 & 90 & 0.536082 & 0.463918 & 1.625940 & 1.155552 & 0.680999 & 0.690541 & 0.690541 & 52 & 38 & -46 & 232 & 186 \\
		195 & 90 & 0.538462 & 0.461538 & 1.552090 & 1.016966 & 0.680999 & 0.690186 & 0.690186 & 52 & 38 & -46 & 232 & 186 \\
		196 & 91 & 0.535714 & 0.464286 & 1.556997 & 1.028386 & 0.679500 & 0.690594 & 0.690594 & 53 & 38 & -47 & 235 & 188 \\
		197 & 91 & 0.538071 & 0.461929 & 1.524517 & 0.964216 & 0.682908 & 0.690246 & 0.690246 & 52 & 39 & -46 & 234 & 188 \\
		198 & 91 & 0.540404 & 0.459596 & 1.605611 & 1.113741 & 0.682908 & 0.689879 & 0.689879 & 52 & 39 & -46 & 234 & 188 \\
		199 & 91 & 0.542714 & 0.457286 & 1.608275 & 1.117526 & 0.682908 & 0.689494 & 0.689494 & 52 & 39 & -46 & 234 & 188 \\
		200 & 92 & 0.540000 & 0.460000 & 1.612711 & 1.128272 & 0.681524 & 0.689944 & 0.689944 & 53 & 39 & -47 & 237 & 190 \\
		201 & 92 & 0.542289 & 0.457711 & 1.449725 & 0.823920 & 0.684616 & 0.689566 & 0.689566 & 52 & 40 & -46 & 236 & 190 \\
		202 & 92 & 0.544554 & 0.455446 & 1.541239 & 0.994704 & 0.681524 & 0.689172 & 0.689172 & 53 & 39 & -47 & 237 & 190 \\
		203 & 93 & 0.541872 & 0.458128 & 1.551742 & 1.015997 & 0.680083 & 0.689637 & 0.689637 & 54 & 39 & -48 & 240 & 192 \\
		204 & 93 & 0.544118 & 0.455882 & 1.528373 & 0.969640 & 0.683345 & 0.689249 & 0.689249 & 53 & 40 & -47 & 239 & 192 \\
		205 & 93 & 0.546341 & 0.453659 & 1.593329 & 1.088107 & 0.683345 & 0.688846 & 0.688846 & 53 & 40 & -47 & 239 & 192 \\
		206 & 94 & 0.543689 & 0.456311 & 1.596146 & 1.095497 & 0.682015 & 0.689325 & 0.689325 & 54 & 40 & -48 & 242 & 194 \\
		207 & 94 & 0.545894 & 0.454106 & 1.523244 & 0.958544 & 0.684976 & 0.688929 & 0.688929 & 53 & 41 & -47 & 241 & 194 \\
		208 & 94 & 0.548077 & 0.451923 & 1.524412 & 0.962778 & 0.682015 & 0.688517 & 0.688517 & 54 & 40 & -48 & 242 & 194 \\
		209 & 95 & 0.545455 & 0.454545 & 1.524860 & 0.965202 & 0.680629 & 0.689009 & 0.689009 & 55 & 40 & -49 & 245 & 196 \\
		210 & 95 & 0.547619 & 0.452381 & 1.569191 & 1.043186 & 0.683755 & 0.688605 & 0.688605 & 54 & 41 & -48 & 244 & 196 \\
		211 & 95 & 0.549763 & 0.450237 & 1.593262 & 1.086330 & 0.683755 & 0.688186 & 0.688186 & 54 & 41 & -48 & 244 & 196 \\
		212 & 96 & 0.547170 & 0.452830 & 1.594275 & 1.090227 & 0.682475 & 0.688691 & 0.688691 & 55 & 41 & -49 & 247 & 198 \\
		213 & 96 & 0.549296 & 0.450704 & 1.522926 & 0.959506 & 0.682475 & 0.688279 & 0.688279 & 55 & 41 & -49 & 247 & 198 \\
		214 & 96 & 0.551402 & 0.448598 & 1.644305 & 1.179347 & 0.682475 & 0.687854 & 0.687854 & 55 & 41 & -49 & 247 & 198 \\
		215 & 97 & 0.548837 & 0.451163 & 1.643928 & 1.181138 & 0.681142 & 0.688369 & 0.688369 & 56 & 41 & -50 & 250 & 200 \\
		216 & 97 & 0.550926 & 0.449074 & 1.606208 & 1.111535 & 0.681142 & 0.687951 & 0.687951 & 56 & 41 & -50 & 250 & 200 \\
		217 & 97 & 0.552995 & 0.447005 & 1.584708 & 1.071825 & 0.681142 & 0.687520 & 0.687520 & 56 & 41 & -50 & 250 & 200 \\
		218 & 97 & 0.555046 & 0.444954 & 1.560328 & 1.024858 & 0.684139 & 0.687075 & 0.687075 & 55 & 42 & -49 & 249 & 200 \\
		219 & 98 & 0.552511 & 0.447489 & 1.569323 & 1.042706 & 0.682908 & 0.687622 & 0.687622 & 56 & 42 & -50 & 252 & 202 \\
		220 & 98 & 0.554545 & 0.445455 & 1.532855 & 0.976414 & 0.682908 & 0.687185 & 0.687185 & 56 & 42 & -50 & 252 & 202 \\
		221 & 98 & 0.556561 & 0.443439 & 1.584204 & 1.068421 & 0.682908 & 0.686735 & 0.686735 & 56 & 42 & -50 & 252 & 202 \\
		222 & 99 & 0.554054 & 0.445946 & 1.592917 & 1.085918 & 0.681624 & 0.687292 & 0.687292 & 57 & 42 & -51 & 255 & 204 \\
		223 & 99 & 0.556054 & 0.443946 & 1.575491 & 1.053920 & 0.681624 & 0.686850 & 0.686850 & 57 & 42 & -51 & 255 & 204 \\
		224 & 99 & 0.558036 & 0.441964 & 1.568694 & 1.038954 & 0.684501 & 0.686396 & 0.686396 & 56 & 43 & -50 & 254 & 204 \\
		225 & 100 & 0.555556 & 0.444444 & 1.565651 & 1.034989 & 0.683315 & 0.686962 & 0.686962 & 57 & 43 & -51 & 257 & 206 \\
		226 & 100 & 0.557522 & 0.442478 & 1.555447 & 1.016247 & 0.683315 & 0.686515 & 0.686515 & 57 & 43 & -51 & 257 & 206 \\
		227 & 100 & 0.559471 & 0.440529 & 1.579109 & 1.058200 & 0.683315 & 0.686057 & 0.686057 & 57 & 43 & -51 & 257 & 206 \\
		228 & 101 & 0.557018 & 0.442982 & 1.587799 & 1.075405 & 0.682078 & 0.686631 & 0.686631 & 58 & 43 & -52 & 260 & 208 \\
		229 & 101 & 0.558952 & 0.441048 & 1.618951 & 1.130582 & 0.682078 & 0.686180 & 0.686180 & 58 & 43 & -52 & 260 & 208 \\
		230 & 101 & 0.560870 & 0.439130 & 1.576012 & 1.051151 & 0.684841 & 0.685719 & 0.685719 & 57 & 44 & -51 & 259 & 208 \\
		231 & 101 & 0.562771 & 0.437229 & 1.600774 & 1.094754 & 0.684841 & 0.685246 & 0.685246 & 57 & 44 & -51 & 259 & 208 \\
		232 & 102 & 0.560345 & 0.439655 & 1.553847 & 1.012608 & 0.683698 & 0.685846 & 0.685846 & 58 & 44 & -52 & 262 & 210 \\
		233 & 102 & 0.562232 & 0.437768 & 1.595195 & 1.085874 & 0.683698 & 0.685381 & 0.685381 & 58 & 44 & -52 & 262 & 210 \\
		234 & 102 & 0.564103 & 0.435897 & 1.612699 & 1.116412 & 0.683698 & 0.684906 & 0.684906 & 58 & 44 & -52 & 262 & 210 \\
		235 & 103 & 0.561702 & 0.438298 & 1.619487 & 1.130195 & 0.682505 & 0.685513 & 0.685513 & 59 & 44 & -53 & 265 & 212 \\
		236 & 103 & 0.563559 & 0.436441 & 1.579163 & 1.057997 & 0.682505 & 0.685046 & 0.685046 & 59 & 44 & -53 & 265 & 212 \\
		237 & 103 & 0.565401 & 0.434599 & 1.572880 & 1.044464 & 0.685161 & 0.684568 & 0.684568 & 58 & 45 & -52 & 264 & 212 \\
		238 & 104 & 0.563025 & 0.436975 & 1.577726 & 1.054355 & 0.684059 & 0.685182 & 0.685182 & 59 & 45 & -53 & 267 & 214 \\
		239 & 104 & 0.564854 & 0.435146 & 1.557108 & 1.017486 & 0.684059 & 0.684711 & 0.684711 & 59 & 45 & -53 & 267 & 214 \\
		240 & 104 & 0.566667 & 0.433333 & 1.548785 & 1.002578 & 0.684059 & 0.684232 & 0.684232 & 59 & 45 & -53 & 267 & 214 \\
		241 & 104 & 0.568465 & 0.431535 & 1.630735 & 1.146591 & 0.684059 & 0.683743 & 0.683743 & 59 & 45 & -53 & 267 & 214 \\
		242 & 105 & 0.566116 & 0.433884 & 1.607539 & 1.107294 & 0.682908 & 0.684379 & 0.684379 & 60 & 45 & -54 & 270 & 216 \\
		243 & 105 & 0.567901 & 0.432099 & 1.635140 & 1.155411 & 0.682908 & 0.683897 & 0.683897 & 60 & 45 & -54 & 270 & 216 \\
		244 & 105 & 0.569672 & 0.430328 & 1.534514 & 0.976234 & 0.685463 & 0.683407 & 0.683407 & 59 & 46 & -53 & 269 & 216 \\
		245 & 106 & 0.567347 & 0.432653 & 1.600506 & 1.093423 & 0.684400 & 0.684048 & 0.684048 & 60 & 46 & -54 & 272 & 218 \\
		246 & 106 & 0.569106 & 0.430894 & 1.649262 & 1.178678 & 0.684400 & 0.683565 & 0.683565 & 60 & 46 & -54 & 272 & 218 \\
		247 & 106 & 0.570850 & 0.429150 & 1.579232 & 1.055352 & 0.684400 & 0.683074 & 0.683074 & 60 & 46 & -54 & 272 & 218 \\
		248 & 107 & 0.568548 & 0.431452 & 1.592057 & 1.079118 & 0.683289 & 0.683720 & 0.683720 & 61 & 46 & -55 & 275 & 220 \\
		249 & 107 & 0.570281 & 0.429719 & 1.567651 & 1.035969 & 0.683289 & 0.683236 & 0.683236 & 61 & 46 & -55 & 275 & 220 \\
		250 & 107 & 0.572000 & 0.428000 & 1.588655 & 1.072490 & 0.683289 & 0.682743 & 0.682743 & 61 & 46 & -55 & 275 & 220 \\
		251 & 107 & 0.573705 & 0.426295 & 1.524180 & 0.957995 & 0.685748 & 0.682243 & 0.682243 & 60 & 47 & -54 & 274 & 220 \\
		252 & 108 & 0.571429 & 0.428571 & 1.533194 & 0.974459 & 0.684722 & 0.682908 & 0.682908 & 61 & 47 & -55 & 277 & 222 \\
		253 & 108 & 0.573123 & 0.426877 & 1.564506 & 1.029098 & 0.684722 & 0.682415 & 0.682415 & 61 & 47 & -55 & 277 & 222 \\
		254 & 108 & 0.574803 & 0.425197 & 1.590772 & 1.074657 & 0.684722 & 0.681914 & 0.681914 & 61 & 47 & -55 & 277 & 222 \\
		255 & 109 & 0.572549 & 0.427451 & 1.592157 & 1.078244 & 0.683648 & 0.682583 & 0.682583 & 62 & 47 & -56 & 280 & 224 \\
		256 & 109 & 0.574219 & 0.425781 & 1.594182 & 1.081483 & 0.683648 & 0.682090 & 0.682090 & 62 & 47 & -56 & 280 & 224 \\
		257 & 109 & 0.575875 & 0.424125 & 1.477729 & 0.877245 & 0.686018 & 0.681588 & 0.681588 & 61 & 48 & -55 & 279 & 224 \\
		258 & 110 & 0.573643 & 0.426357 & 1.564701 & 1.029167 & 0.685026 & 0.682261 & 0.682261 & 62 & 48 & -56 & 282 & 226 \\
		259 & 110 & 0.575290 & 0.424710 & 1.568747 & 1.036074 & 0.685026 & 0.681767 & 0.681767 & 62 & 48 & -56 & 282 & 226 \\
		260 & 110 & 0.576923 & 0.423077 & 1.579842 & 1.055179 & 0.685026 & 0.681266 & 0.681266 & 62 & 48 & -56 & 282 & 226 \\
		261 & 110 & 0.578544 & 0.421456 & 1.587731 & 1.068657 & 0.685026 & 0.680758 & 0.680758 & 62 & 48 & -56 & 282 & 226 \\
		262 & 111 & 0.576336 & 0.423664 & 1.592299 & 1.077620 & 0.683988 & 0.681447 & 0.681447 & 63 & 48 & -57 & 285 & 228 \\
		263 & 111 & 0.577947 & 0.422053 & 1.570894 & 1.040352 & 0.683988 & 0.680946 & 0.680946 & 63 & 48 & -57 & 285 & 228 \\
		264 & 111 & 0.579545 & 0.420455 & 1.482776 & 0.886542 & 0.686273 & 0.680438 & 0.680438 & 62 & 49 & -56 & 284 & 228 \\
		265 & 112 & 0.577358 & 0.422642 & 1.580364 & 1.055828 & 0.685314 & 0.681130 & 0.681130 & 63 & 49 & -57 & 287 & 230 \\
		266 & 112 & 0.578947 & 0.421053 & 1.578511 & 1.052475 & 0.685314 & 0.680629 & 0.680629 & 63 & 49 & -57 & 287 & 230 \\
		267 & 112 & 0.580524 & 0.419476 & 1.531959 & 0.972163 & 0.685314 & 0.680122 & 0.680122 & 63 & 49 & -57 & 287 & 230 \\
		268 & 112 & 0.582090 & 0.417910 & 1.563716 & 1.026831 & 0.685314 & 0.679609 & 0.679609 & 63 & 49 & -57 & 287 & 230 \\
		269 & 113 & 0.579926 & 0.420074 & 1.575499 & 1.047931 & 0.684311 & 0.680316 & 0.680316 & 64 & 49 & -58 & 290 & 232 \\
		270 & 113 & 0.581481 & 0.418519 & 1.581444 & 1.058053 & 0.684311 & 0.679809 & 0.679809 & 64 & 49 & -58 & 290 & 232 \\
		271 & 113 & 0.583026 & 0.416974 & 1.513649 & 0.940085 & 0.686515 & 0.679297 & 0.679297 & 63 & 50 & -57 & 289 & 232 \\
		272 & 114 & 0.580882 & 0.419118 & 1.589931 & 1.071789 & 0.685587 & 0.680006 & 0.680006 & 64 & 50 & -58 & 292 & 234 \\
		273 & 114 & 0.582418 & 0.417582 & 1.584748 & 1.062741 & 0.685587 & 0.679500 & 0.679500 & 64 & 50 & -58 & 292 & 234 \\
		274 & 114 & 0.583942 & 0.416058 & 1.574066 & 1.044340 & 0.685587 & 0.678988 & 0.678988 & 64 & 50 & -58 & 292 & 234 \\
		275 & 114 & 0.585455 & 0.414545 & 1.567944 & 1.033840 & 0.685587 & 0.678470 & 0.678470 & 64 & 50 & -58 & 292 & 234 \\
		276 & 115 & 0.583333 & 0.416667 & 1.564751 & 1.029087 & 0.684616 & 0.679193 & 0.679193 & 65 & 50 & -59 & 295 & 236 \\
		277 & 115 & 0.584838 & 0.415162 & 1.571929 & 1.041349 & 0.684616 & 0.678682 & 0.678682 & 65 & 50 & -59 & 295 & 236 \\
		278 & 115 & 0.586331 & 0.413669 & 1.506016 & 0.927403 & 0.686744 & 0.678166 & 0.678166 & 64 & 51 & -58 & 294 & 236 \\
		279 & 116 & 0.584229 & 0.415771 & 1.556968 & 1.014881 & 0.685846 & 0.678890 & 0.678890 & 65 & 51 & -59 & 297 & 238 \\
		280 & 116 & 0.585714 & 0.414286 & 1.572634 & 1.041664 & 0.685846 & 0.678380 & 0.678380 & 65 & 51 & -59 & 297 & 238 \\
		281 & 116 & 0.587189 & 0.412811 & 1.546527 & 0.997186 & 0.685846 & 0.677865 & 0.677865 & 65 & 51 & -59 & 297 & 238 \\
		282 & 117 & 0.585106 & 0.414894 & 1.555353 & 1.012806 & 0.684906 & 0.678590 & 0.678590 & 66 & 51 & -60 & 300 & 240 \\
		283 & 117 & 0.586572 & 0.413428 & 1.553842 & 1.010278 & 0.684906 & 0.678082 & 0.678082 & 66 & 51 & -60 & 300 & 240 \\
		284 & 117 & 0.588028 & 0.411972 & 1.539753 & 0.986385 & 0.684906 & 0.677568 & 0.677568 & 66 & 51 & -60 & 300 & 240 \\
		285 & 117 & 0.589474 & 0.410526 & 1.545003 & 0.994001 & 0.686962 & 0.677050 & 0.677050 & 65 & 52 & -59 & 299 & 240 \\
		286 & 118 & 0.587413 & 0.412587 & 1.490177 & 0.901101 & 0.686092 & 0.677786 & 0.677786 & 66 & 52 & -60 & 302 & 242 \\
		287 & 118 & 0.588850 & 0.411150 & 1.550638 & 1.005634 & 0.683926 & 0.677274 & 0.677274 & 67 & 51 & -61 & 303 & 242 \\
		288 & 118 & 0.590278 & 0.409722 & 1.553325 & 1.008780 & 0.686092 & 0.676757 & 0.676757 & 66 & 52 & -60 & 302 & 242 \\
		289 & 118 & 0.591696 & 0.408304 & 1.527580 & 0.965378 & 0.686092 & 0.676235 & 0.676235 & 66 & 52 & -60 & 302 & 242 \\
		290 & 119 & 0.589655 & 0.410345 & 1.532590 & 0.974207 & 0.685182 & 0.676984 & 0.676984 & 67 & 52 & -61 & 305 & 244 \\
		291 & 119 & 0.591065 & 0.408935 & 1.542171 & 0.990598 & 0.685182 & 0.676468 & 0.676468 & 67 & 52 & -61 & 305 & 244 \\
		292 & 119 & 0.592466 & 0.407534 & 1.566586 & 1.030596 & 0.687168 & 0.675949 & 0.675949 & 66 & 53 & -60 & 304 & 244 \\
		293 & 120 & 0.590444 & 0.409556 & 1.525029 & 0.960706 & 0.686326 & 0.676697 & 0.676697 & 67 & 53 & -61 & 307 & 246 \\
		294 & 120 & 0.591837 & 0.408163 & 1.505827 & 0.928484 & 0.686326 & 0.676183 & 0.676183 & 67 & 53 & -61 & 307 & 246 \\
		295 & 120 & 0.593220 & 0.406780 & 1.568379 & 1.034237 & 0.686326 & 0.675665 & 0.675665 & 67 & 53 & -61 & 307 & 246 \\
		296 & 120 & 0.594595 & 0.405405 & 1.572684 & 1.041551 & 0.686326 & 0.675143 & 0.675143 & 67 & 53 & -61 & 307 & 246 \\
		297 & 121 & 0.592593 & 0.407407 & 1.577855 & 1.050805 & 0.685443 & 0.675901 & 0.675901 & 68 & 53 & -62 & 310 & 248 \\
		298 & 121 & 0.593960 & 0.406040 & 1.604458 & 1.095622 & 0.685443 & 0.675385 & 0.675385 & 68 & 53 & -62 & 310 & 248 \\
		299 & 121 & 0.595318 & 0.404682 & 1.548991 & 1.002389 & 0.685443 & 0.674865 & 0.674865 & 68 & 53 & -62 & 310 & 248 \\
		300 & 122 & 0.593333 & 0.406667 & 1.563936 & 1.027990 & 0.684523 & 0.675622 & 0.675622 & 69 & 53 & -63 & 313 & 250 \\
		301 & 122 & 0.594684 & 0.405316 & 1.534673 & 0.978868 & 0.684523 & 0.675108 & 0.675108 & 69 & 53 & -63 & 313 & 250 \\
		302 & 122 & 0.596026 & 0.403974 & 1.546587 & 0.999063 & 0.684523 & 0.674590 & 0.674590 & 69 & 53 & -63 & 313 & 250 \\
		303 & 122 & 0.597360 & 0.402640 & 1.460171 & 0.853207 & 0.686548 & 0.674068 & 0.674068 & 68 & 54 & -62 & 312 & 250 \\
		304 & 123 & 0.595395 & 0.404605 & 1.471157 & 0.871501 & 0.685693 & 0.674835 & 0.674835 & 69 & 54 & -63 & 315 & 252 \\
		305 & 123 & 0.596721 & 0.403279 & 1.548643 & 1.001806 & 0.685693 & 0.674319 & 0.674319 & 69 & 54 & -63 & 315 & 252 \\
		306 & 123 & 0.598039 & 0.401961 & 1.533652 & 0.976912 & 0.685693 & 0.673799 & 0.673799 & 69 & 54 & -63 & 315 & 252 \\
		307 & 123 & 0.599349 & 0.400651 & 1.535559 & 0.980331 & 0.685693 & 0.673275 & 0.673275 & 69 & 54 & -63 & 315 & 252 \\
		308 & 124 & 0.597403 & 0.402597 & 1.543094 & 0.993207 & 0.684799 & 0.674051 & 0.674051 & 70 & 54 & -64 & 318 & 254 \\
		309 & 124 & 0.598706 & 0.401294 & 1.595714 & 1.081290 & 0.684799 & 0.673533 & 0.673533 & 70 & 54 & -64 & 318 & 254 \\
		310 & 124 & 0.600000 & 0.400000 & 1.565185 & 1.029116 & 0.686760 & 0.673012 & 0.673012 & 69 & 55 & -63 & 317 & 254 \\
		311 & 125 & 0.598071 & 0.401929 & 1.572446 & 1.041623 & 0.685930 & 0.673786 & 0.673786 & 70 & 55 & -64 & 320 & 256 \\
		312 & 125 & 0.599359 & 0.400641 & 1.606106 & 1.097879 & 0.685930 & 0.673271 & 0.673271 & 70 & 55 & -64 & 320 & 256 \\
		313 & 125 & 0.600639 & 0.399361 & 1.544376 & 0.995091 & 0.685930 & 0.672752 & 0.672752 & 70 & 55 & -64 & 320 & 256 \\
		314 & 126 & 0.598726 & 0.401274 & 1.549046 & 1.003169 & 0.685063 & 0.673525 & 0.673525 & 71 & 55 & -65 & 323 & 258 \\
		315 & 126 & 0.600000 & 0.400000 & 1.515410 & 0.947289 & 0.685063 & 0.673012 & 0.673012 & 71 & 55 & -65 & 323 & 258 \\
		316 & 126 & 0.601266 & 0.398734 & 1.548878 & 1.003257 & 0.685063 & 0.672495 & 0.672495 & 71 & 55 & -65 & 323 & 258 \\
		317 & 126 & 0.602524 & 0.397476 & 1.516768 & 0.948912 & 0.686962 & 0.671975 & 0.671975 & 70 & 56 & -64 & 322 & 258 \\
		318 & 127 & 0.600629 & 0.399371 & 1.516095 & 0.947854 & 0.686156 & 0.672756 & 0.672756 & 71 & 56 & -65 & 325 & 260 \\
		319 & 127 & 0.601881 & 0.398119 & 1.530276 & 0.971725 & 0.686156 & 0.672242 & 0.672242 & 71 & 56 & -65 & 325 & 260 \\
		320 & 127 & 0.603125 & 0.396875 & 1.485020 & 0.896958 & 0.686156 & 0.671724 & 0.671724 & 71 & 56 & -65 & 325 & 260 \\
		321 & 127 & 0.604361 & 0.395639 & 1.541250 & 0.990429 & 0.686156 & 0.671204 & 0.671204 & 71 & 56 & -65 & 325 & 260 \\
		322 & 128 & 0.602484 & 0.397516 & 1.546389 & 0.999154 & 0.685314 & 0.671991 & 0.671991 & 72 & 56 & -66 & 328 & 262 \\
		323 & 128 & 0.603715 & 0.396285 & 1.540264 & 0.989221 & 0.685314 & 0.671477 & 0.671477 & 72 & 56 & -66 & 328 & 262 \\
		324 & 128 & 0.604938 & 0.395062 & 1.517915 & 0.951317 & 0.687154 & 0.670958 & 0.670958 & 71 & 57 & -65 & 327 & 262 \\
		325 & 128 & 0.606154 & 0.393846 & 1.485002 & 0.897349 & 0.687154 & 0.670437 & 0.670437 & 71 & 57 & -65 & 327 & 262 \\
		326 & 129 & 0.604294 & 0.395706 & 1.490399 & 0.906125 & 0.686371 & 0.671232 & 0.671232 & 72 & 57 & -66 & 330 & 264 \\
		327 & 129 & 0.605505 & 0.394495 & 1.500015 & 0.922418 & 0.686371 & 0.670717 & 0.670717 & 72 & 57 & -66 & 330 & 264 \\
		328 & 129 & 0.606707 & 0.393293 & 1.525942 & 0.965538 & 0.686371 & 0.670198 & 0.670198 & 72 & 57 & -66 & 330 & 264 \\
		329 & 130 & 0.604863 & 0.395137 & 1.540702 & 0.990016 & 0.685554 & 0.670990 & 0.670990 & 73 & 57 & -67 & 333 & 266 \\
		330 & 130 & 0.606061 & 0.393939 & 1.439525 & 0.823318 & 0.685554 & 0.670478 & 0.670478 & 73 & 57 & -67 & 333 & 266 \\
		350 & 135 & 0.614286 & 0.385714 & 1.482935 & 0.897257 & 0.686962 & 0.666792 & 0.666792 & 75 & 60 & -69 & 345 & 276 \\
		381 & 143 & 0.624672 & 0.375328 & 1.461889 & 0.867767 & 0.687636 & 0.661731 & 0.661731 & 79 & 64 & -73 & 365 & 292 \\
		395 & 147 & 0.627848 & 0.372152 & 1.449723 & 0.849914 & 0.687932 & 0.660091 & 0.660091 & 81 & 66 & -75 & 375 & 300 \\
		400 & 148 & 0.630000 & 0.370000 & 1.497142 & 0.926807 & 0.687292 & 0.658956 & 0.658956 & 82 & 66 & -76 & 378 & 302 \\
		500 & 173 & 0.654000 & 0.346000 & 1.458518 & 0.879858 & 0.688311 & 0.644935 & 0.644935 & 95 & 78 & -89 & 441 & 352 \\
		600 & 196 & 0.673333 & 0.326667 & 1.396560 & 0.801562 & 0.688924 & 0.631793 & 0.631793 & 107 & 89 & -101 & 499 & 398 \\
		700 & 217 & 0.690000 & 0.310000 & 1.385933 & 0.801662 & 0.689309 & 0.619101 & 0.619101 & 118 & 99 & -112 & 552 & 440 \\
		800 & 237 & 0.703750 & 0.296250 & 1.355183 & 0.771506 & 0.690572 & 0.607653 & 0.607653 & 127 & 110 & -121 & 601 & 480 \\
		900 & 258 & 0.713333 & 0.286667 & 1.330589 & 0.748304 & 0.689507 & 0.599140 & 0.599140 & 140 & 118 & -134 & 656 & 522 \\
		1000 & 277 & 0.723000 & 0.277000 & 1.307504 & 0.727803 & 0.690271 & 0.590098 & 0.590098 & 149 & 128 & -143 & 703 & 560 \\
		2000 & 442 & 0.779000 & 0.221000 & 1.213813 & 0.684167 & 0.690842 & 0.528171 & 0.528171 & 236 & 206 & -230 & 1120 & 890 \\
		5000 & 820 & 0.836000 & 0.164000 & 1.008169 & 0.536491 & 0.691573 & 0.446244 & 0.446244 & 433 & 387 & -427 & 2073 & 1646 \\
		6250 & 953 & 0.847520 & 0.152480 & 0.981175 & 0.529509 & 0.691357 & 0.426987 & 0.426987 & 505 & 448 & -499 & 2411 & 1912 \\
		7500 & 1077 & 0.856400 & 0.143600 & 0.948839 & 0.511547 & 0.691542 & 0.411445 & 0.411445 & 569 & 508 & -563 & 2723 & 2160 \\
		8750 & 1192 & 0.863771 & 0.136229 & 0.920350 & 0.495508 & 0.692121 & 0.398058 & 0.398058 & 623 & 569 & -617 & 3007 & 2390 \\
		10000 & 1307 & 0.869300 & 0.130700 & 0.884325 & 0.467235 & 0.691985 & 0.387715 & 0.387715 & 685 & 622 & -679 & 3299 & 2620 \\
		14180 & 1647 & 0.883850 & 0.116150 & 0.829504 & 0.441183 & 0.692054 & 0.359182 & 0.359182 & 862 & 785 & -856 & 4156 & 3300 \\
		40886 & 3356 & 0.917918 & 0.082082 & 0.666487 & 0.355024 & 0.691740 & 0.283825 & 0.283825 & 1767 & 1589 & -1761 & 8479 & 6718 \\
		100000 & 6099 & 0.939010 & 0.060990 & 0.581747 & 0.330025 & 0.691375 & 0.229683 & 0.229683 & 3231 & 2868 & -3225 & 15429 & 12204 \\
	\end{longtable}
	
	\clearpage
	

\begin{thebibliography}{99}	
		\bibitem{Thomson1904}
			J.\,J.~Thomson, Phil.\ Mag.\ \textbf{7}, 237 (1904).
		
		\bibitem{Smale1998}
			S.~Smale, Math.\ Intelligencer \textbf{20}, 7 (1998).
		
		\bibitem{Andrei1988}
			E.\,Y.~Andrei, G.~Deville, D.\,C.~Glattli, F.\,I.\,B.~Williams, E.~Paris, and B.~Etienne, Phys.\ Rev.\ Lett.\ \textbf{60}, 2765 (1988).
		
		\bibitem{Gammel1987}
			P.\,L.~Gammel, D.\,J.~Bishop, G.\,J.~Dolan, J.\,R.~Kwo, C.\,A.~Murray, L.\,F.~Schneemeyer, and J.\,V.~Waszczak, Phys.\ Rev.\ Lett.\ \textbf{59}, 2592 (1987).
		
		\bibitem{Yethiraj2003}
			A.~Yethiraj and A.~van Blaaderen, Nature \textbf{421}, 513 (2003).
		
		\bibitem{Thomas1994}
			H.~Thomas, G.\,E.~Morfill, V.~Demmel, J.~Goree, B.~Feuerbacher, and D.~M\"ohlmann, Phys.\ Rev.\ Lett.\ \textbf{73}, 652 (1994).
		
		\bibitem{Andrei2020}
			E.\,Y.~Andrei and A.\,H.~MacDonald, Nat.\ Mater.\ \textbf{19}, 1265 (2020).
		
		\bibitem{Zhou2021}
			Y.~Zhou, J.~Sung, E.~Brutschea, I.~Esterlis, Y.~Wang, G.~Scuri, R.\,J.~Gelly, H.~Heo, T.~Taniguchi, K.~Watanabe, G.~Zar\'and, M.\,D.~Lukin, P.~Kim, E.~Demler, and H.~Park, Nature \textbf{595}, 48 (2021).
		
		\bibitem{Lozovik1992}
			Yu.\,E.~Lozovik and V.\,A.~Mandelshtam, Phys.\ Lett.\ A \textbf{165}, 469 (1992).
		
		\bibitem{Bolton1992}
			F.~Bolton and U.~R\"ossler, Superlatt.\ Microstruct.\ \textbf{13}, 139 (1992).
		
		\bibitem{Bedanov1994}
			V.\,M.~Bedanov and F.\,M.~Peeters, Phys.\ Rev.\ B \textbf{49}, 2667 (1994).
		
		\bibitem{Peeters1995}
			F.\,M.~Peeters, V.\,A.~Schweigert, and V.\,M.~Bedanov, Physica B \textbf{212}, 237 (1995).
		
		\bibitem{Erkoc2001}
			S.~Erkoc and H.~Oymak, Phys.\ Lett.\ A \textbf{290}, 28 (2001).
		
		\bibitem{Cerkaski2015}
			M.~Cerkaski, R.\,G.~Nazmitdinov, and A.~Puente, Phys.\ Rev.\ E \textbf{91}, 032312 (2015).
		
		\bibitem{Nazmitdinov2017}
			R.\,G.~Nazmitdinov, A.~Puente, M.~Cerkaski, and M.~Pons, Phys.\ Rev.\ E \textbf{95}, 042603 (2017).
		
		\bibitem{Irvine2010}
			W.\,T.\,M.~Irvine, V.~Vitelli, and P.\,M.~Chaikin, Nature \textbf{468}, 947 (2010).
		
		\bibitem{Bausch2003}
			A.\,R.~Bausch, M.\,J.~Bowick, A.~Cacciuto, A.\,D.~Dinsmore, M.\,F.~Hsu, D.\,R.~Nelson, M.\,G.~Nikolaides, A.~Travesset, and D.\,A.~Weitz, Science \textbf{299}, 1716 (2003).
		
		\bibitem{Koulakov1998}
			A.\,A.~Koulakov and B.\,I.~Shklovskii, Phys.\ Rev.\ B \textbf{57}, 2352 (1998).
			
		\bibitem{Worley2006}
			A.~Worley, arXiv:physics/0609231 (2006).
		
		\bibitem{Moore2007}
			A.~Mughal and M.\,A.~Moore, Phys.\ Rev.\ E \textbf{76}, 011606 (2007).
			
		\bibitem{Wales1997}
			D.\,J.~Wales and J.\,P.\,K.~Doye, J. Phys. Chem. A \textbf{101}, 5111 (1997).
		
		\bibitem{Amore2023}
			P.~Amore and U.~Zarate, Phys.\ Rev.\ E \textbf{108}, 055302 (2023).
			
		\bibitem{Amore2025}
			P.~Amore, V.~Figueroa, E.~Diaz, J.\,A.~L\'opez, and T.~Vincent, J.\ Stat.\ Phys. \textbf{192}, 132 (2025).
			
		\bibitem{Lavrov2026_60ND}
			G.\,K.~Lavrov and E.\,G.~Nikonov, Zenodo, 10.5281/zenodo.21792915 (2026).
		
		\bibitem{Lavrov2026_100000ND}
			G.\,K.~Lavrov and E.\,G.~Nikonov, Zenodo, 10.5281/zenodo.22014683 (2026).
			
		\bibitem{Lavrov2026_8750ND}
			G.\,K.~Lavrov and E.\,G.~Nikonov, Zenodo, 10.5281/zenodo.22747867 (2026).
			
		\bibitem{Kong2004}
			M.~Kong, B.~Partoens, A.~Matulis, and F.\,M.~Peeters Phys.\ Rev.\ E \textbf{69}, 036412 (2004).
		
		\bibitem{Lavrov2026_60SM}
			G.\,K.~Lavrov and E.\,G.~Nikonov, Zenodo, 10.5281/zenodo.21792667 (2026).
		
		\bibitem{Lavrov2026_100000SM}
			G.\,K.~Lavrov and E.\,G.~Nikonov, Zenodo, 10.5281/zenodo.22014383 (2026).
		
		\bibitem{Shannon1948}
			C.\,E.~Shannon, Bell Syst.\ Tech.\ J.\ \textbf{27}, 379 (1948).
		
		\bibitem{CoverThomas}
			T.\,M.~Cover and J.\,A.~Thomas, \textit{Elements of Information Theory}, 2nd ed.\ (Wiley, New York, 2006).
		
		\bibitem{Lavrov2026SM}
			See the Supplemental Material for full verification of the entropy decomposition, the charge sum rule theorem and the boundary--bulk constraints.
			
		\bibitem{Lavrov2026_10000ND}
			G.\,K.~Lavrov and E.\,G.~Nikonov, Zenodo, 10.5281/zenodo.22748039 (2026).
		
		\bibitem{Wales2006}
			D.\,J.~Wales and S.~Ulker, Phys.\ Rev.\ B \textbf{74}, 212101 (2006).
		
		\bibitem{Wales2009}
			D.\,J.~Wales, H.~McKay, and E.\,L.~Altschuler, Phys.\ Rev.\ B \textbf{79}, 224115 (2009).
		
		\bibitem{Bowick2000}
			M.\,J.~Bowick, D.\,R.~Nelson, and A.~Travesset, Phys.\ Rev.\ B \textbf{62}, 8738 (2000).
		
		\bibitem{Bowick2002}
			M.\,J.~Bowick, A.~Cacciuto, D.\,R.~Nelson, and A.~Travesset, Phys.\ Rev.\ Lett.\ \textbf{89}, 215502 (2002).
		
		\bibitem{Yao2013}
			Z.~Yao and M.~Olvera de la Cruz, Phys.\ Rev.\ Lett.\ \textbf{111}, 115503 (2013).
		
		\bibitem{Irvine2012}
			W.\,T.\,M.~Irvine, M.\,J.~Bowick, and P.\,M.~Chaikin, Nat.\ Mater.\ \textbf{11}, 948 (2012).
	\end{thebibliography}
\end{document}